\theoremstyle{plain}
\newtheorem{theorem}{Theorem}[section]
\newtheorem{lemma}[theorem]{Lemma}
\newtheorem{proposition}[theorem]{Proposition}
\theoremstyle{definition}
\newtheorem{definition}[theorem]{Definition}
\theoremstyle{remark}
\newtheorem{remark}{Remark}
\journal{Journal of \LaTeX\ Templates}
\begin{document}

\begin{frontmatter}

\title{Frequency-limited $\mathcal{H}_2$ Model Order Reduction Based on Relative Error}

\author[mymainaddress]{Umair~Zulfiqar}

\author[mymainaddress,mysecondaryaddress]{Xin~Du\corref{mycorrespondingauthor}}
\cortext[mycorrespondingauthor]{Corresponding author}
\ead{duxin@shu.edu.cn}

\author[mymainaddress]{Qiuyan~Song}

\author[ml]{Zhi-Hua~Xiao}

\author[vs]{Victor~Sreeram}

\address[mymainaddress]{School of Mechatronic Engineering and Automation, Shanghai University, Shanghai, 200444, China}
\address[mysecondaryaddress]{Shanghai Key Laboratory of Power Station Automation Technology, Shanghai University, Shanghai, 200444, China}
\address[ml]{School of Information and Mathematics, Yangtze University, Jingzhou, Hubei, 434023, China}
\address[vs]{Department of Electrical, Electronic, and Computer Engineering, The University of Western Australia, Perth, 6009, Australia}

\begin{abstract}
Frequency-limited model order reduction aims to approximate a high-order model with a reduced-order model that maintains high fidelity within a specific frequency range. Beyond this range, a decrease in accuracy is acceptable due to the nature of the problem. The quality of the reduced-order model is typically evaluated using absolute or relative measures of approximation error. Relative error, which represents the percentage error, becomes particularly relevant when reducing a plant model for the purpose of designing a reduced-order controller. This paper derives the necessary conditions for achieving a local optimum of the frequency-limited $\mathcal{H}_2$ norm for the relative error system. Based on these optimality conditions, an oblique projection algorithm is proposed to ensure a small relative error within the desired frequency interval. Unlike existing algorithms, the proposed approach does not necessitate solving large-scale Lyapunov and Ricatti equations. Instead, the proposed algorithm relies on solving sparse-dense Sylvester equations, which typically emerge in the majority of $\mathcal{H}_2$ model order reduction algorithms, but can be efficiently solved. To evaluate the performance of the proposed algorithm, a comparison is conducted with three existing techniques: frequency-limited balanced truncation, frequency-limited balanced stochastic truncation, and frequency-limited iterative Rational Krylov algorithm. The comparative analysis focuses on designing reduced-order controllers for high-order plants. Numerical results confirm that the reduced-order controllers obtained using the proposed algorithm ensure superior robust closed-loop stability.
\end{abstract}

\begin{keyword}
$\mathcal{H}_2$ norm\sep local optima\sep model order reduction\sep projection\sep relative error\sep frequency-limited
\end{keyword}

\end{frontmatter}

\section{Introduction}
In the last few decades, computer-aided simulation has emerged as a key catalyst for facilitating rapid scientific discoveries and technological advancements. By utilizing computer programs, mathematical models representing both natural and artificial dynamical systems can be simulated, enabling analysis across a vast array of scenarios and configurations. Typically, the dynamical behavior of these systems is captured through mathematical models formulated as partial differential equations (PDEs). To harness the rich mathematical tools available in dynamical system theory and achieve a comprehensive understanding of the system's behavior, these PDEs are converted into ordinary differential equations (ODEs) \cite{benner2017model,quarteroni2014reduced,antoulas2020interpolatory}.

Consider a state-space model of a dynamical system $H$, where the inputs, outputs, and states are represented by $u(t)\in\mathbb{R}^{m\times 1}$, $y(t)\in\mathbb{R}^{p\times 1}$, and $x(t)\in\mathbb{R}^{n\times 1}$, respectively. The state-space equations describing the system $H$ are given by:
\begin{align}
\dot{x}(t)&=Ax(t)+Bu(t),\nonumber\\
y(t)&=Cx(t)+Du(t),\nonumber
\end{align}
where the dimensions of the matrices in the above equations are as follows: $A\in\mathbb{R}^{n\times n}$, $B\in\mathbb{R}^{n\times m}$, $C\in\mathbb{R}^{p\times n}$, and $D\in\mathbb{R}^{p\times m}$. The $n^{th}$-order $p\times m$ transfer function $H(s)$ of $H$ can be related to its state-space representation $(A,B,C,D)$ as follows:
\begin{align}
H(s)=C(sI-A)^{-1}B+D.\nonumber
\end{align}

In order to simulate $H$, the solution of $n$ ordinary differential equations (ODEs) is required for each specific scenario. The continuous advancements in chip manufacturing technology have led to significant improvements in the speed, memory capacity, and processing power of modern computers \cite{benner2019system}. However, these technological advancements have also resulted in the increasing complexity of modern dynamical systems, leading to practical mathematical models with dimensions reaching several thousand. As a result, efficiently simulating and analyzing these high-order models remains a computational challenge. This challenge has motivated the development of computationally efficient model order reduction (MOR) algorithms, aiming to reduce the computational cost of simulations \cite{benner2005dimension}. MOR involves constructing a reduced-order model (ROM) that provides an accurate approximation of the original high-order model. The ROM serves as a surrogate for the original model during simulation and analysis, exhibiting similar behavior and characteristics.

Let us consider the ROM $\tilde{H}$, which is characterized by the states $\tilde{x}(t)$ and outputs $\tilde{y}(t)$. The dynamics of $\tilde{H}$ are described by the following state-space equations:
\begin{align}
\dot{\tilde{x}}(t)&=\tilde{A}\tilde{x}(t)+\tilde{B}u(t),\nonumber\\
\tilde{y}(t)&=\tilde{C}\tilde{x}(t)+\tilde{D}u(t).\nonumber
\end{align}
The dimensions of the matrices involved in these equations are as follows: $\tilde{A}\in\mathbb{R}^{r\times r}$, $\tilde{B}\in\mathbb{R}^{r\times m}$, $\tilde{C}\in\mathbb{R}^{p\times r}$, and $\tilde{D}\in\mathbb{R}^{p\times m}$. The $r^{th}$-order $p\times m$ transfer function $\tilde{H}(s)$ of $\tilde{H}$ can be expressed in terms of its state-space realization $(\tilde{A},\tilde{B},\tilde{C},\tilde{D})$ as follows:
\begin{align}
\tilde{H}(s)=\tilde{C}(sI-\tilde{A})^{-1}\tilde{B}+\tilde{D}.\nonumber
\end{align}
The primary objective of model order reduction (MOR) is to ensure that the ROM $\tilde{H}$ closely emulates the behavior of the original system $H$, while preserving crucial properties such as stability, passivity, contractivity, and more. Additionally, it is desirable to minimize the computational cost required to obtain $\tilde{H}$ as well as the order $r$ of the reduced model.

The MOR problem can be mathematically formulated as follows:
\begin{enumerate}
\item $H=\tilde{H}+\Delta_a$.
\item $H=\tilde{H}(I+\Delta_r)$.
\end{enumerate}
The objective is to minimize $\Delta_a$ and $\Delta_r$ in order to reduce the approximation error. The quality of the approximation is often measured using the $\mathcal{H}_2$ and $\mathcal{H}_\infty$ norms, which are commonly used system norms. Existing algorithms that employ $\Delta_r$ as the approximation criterion tend to be computationally expensive, making $\Delta_r$ less frequently used compared to $\Delta_a$. However, choosing the relative error as a reduction criterion is motivated by the objective of designing a low-order controller for the reduced-order plant. In this context, $||\Delta_r(s)||$ serves as a more meaningful and theoretically sound criterion to ensure closed-loop stability when connecting the controller to the original high-order plant. To achieve this, the MOR algorithm employed for constructing $\tilde{H}(s)$ should minimize the following reduction criterion:
\begin{align}
\underset{\substack{\tilde{H}(s)\\\textnormal{order}=r}}{\text{min}}||[I+K(s)\tilde{H}(s)]^{-1}K(s)\tilde{H}(s)\Delta_r(s)||,
\end{align}
where $\Delta_r(s)=\tilde{H}^{-1}(s)\Delta_{a}(s)$. Alternatively, the reduction criterion can be expressed as:
\begin{align}
\underset{\substack{\tilde{H}(s)\\\textnormal{order}=r}}{\text{min}}||[\Delta_r(s)]\tilde{H}(s)K(s)[I+\tilde{H}(s)K(s)]^{-1}||,
\end{align}
with $\Delta_r(s)=\Delta_a(s)\tilde{H}^{-1}(s)$. In these reduction criteria, $K(s)$ and $\tilde{H}(s)$ are unknown since $K(s)$ is designed for $\tilde{H}(s)$ after reducing $H(s)$. Assuming a well-designed control system, it is reasonable to expect that $\tilde{H}(s)K(s)[I+\tilde{H}(s)K(s)]^{-1}$ or $[I+K(s)\tilde{H}(s)]^{-1}K(s)\tilde{H}(s)$ approximates $I$ within the operating bandwidth of the system and approximates $0$ outside the bandwidth. The numerical results section will highlight that a robust control design for the reduced-order plant can satisfy this condition. Under this assumption, the problem at hand can be viewed as a frequency-limited relative error MOR problem. If $||\Delta_r(s)||$ is small within the desired frequency interval (operating bandwidth), connecting $K(s)$ to $H(s)$ will ensure good closed-loop stability. Any error outside this desired frequency range is attenuated by $\tilde{H}(s)K(s)[I+\tilde{H}(s)K(s)]^{-1}$ or $[I+K(s)\tilde{H}(s)]^{-1}K(s)\tilde{H}(s)$ due to their approximate frequency response of $0$. Furthermore, the task of obtaining a reduced-order infinite impulse response (IIR) filter from a high-order finite impulse response (FIR) filter can also be seen as a frequency-limited relative error MOR problem.

Various families of MOR algorithms have been developed based on different approximation criteria, preservation requirements, and methodological approaches to achieve the desired objectives. This paper provides a brief survey of important algorithms in the literature  that are relevant to the topic at hand. Among these algorithms, balanced truncation (BT) is one of the most important and renowned MOR algorithms, known for its advantageous features such as stability preservation, \textit{apriori} error bound, and accuracy \cite{moore1981principal}. Through extensions and modifications documented in the literature, BT has evolved into a diverse family of algorithms known as balancing-based MOR algorithms. While BT is suitable for medium-scale systems, its applicability is extended to large-scale systems by utilizing low-rank and cost-effective solutions to Lyapunov equations, as discussed in \cite{kurschner2020inexact,mehrmann2005balanced}. The BT family encompasses algorithms that also preserve other crucial properties, such as passivity and contractivity, as explored in \cite{phillips2003guaranteed,yan2008second,phillips2004poor,reis2010positive}. Among the algorithms in the BT family, the most relevant to the problem addressed in this paper is the balanced stochastic truncation (BST) \cite{green1988balanced}. BST ensures a small relative error with an upper bound on $||\Delta_r(s)||_{\mathcal{H}_\infty}$, as established in \cite{green1988relative}. Additionally, BST preserves the stable minimum phase characteristic of the original model.

In the existing literature, most MOR algorithms that utilize the $\mathcal{H}_2$ norm of error as their approximation criterion (in contrast to those using the $\mathcal{H}_\infty$ norm of error) are computationally efficient and suitable for large-scale systems \cite{wolf2014h}. Wilson's conditions, derived in \cite{wilson1970optimum}, provide the necessary conditions for a local optimum of $||\Delta_a(s)||_{\mathcal{H}_2}^2$. These conditions, shown to be equivalent to specific interpolation conditions in \cite{gugercin2008h_2,van2008h2}, indicate that a local optimal $\tilde{H}(s)$ interpolates $H(s)$ at selected points in the $s$-plane. The pioneering interpolation algorithm for constructing a local optimum of $||\Delta_a(s)||_{\mathcal{H}_2}^2$ in single-input single-output (SISO) systems is the iterative rational Krylov algorithm (IRKA) \cite{gugercin2008h_2}. The applicability of IRKA has been extended to multi-input multi-output (MIMO) systems in \cite{van2008h2}. Another algorithm based on Sylvester equations and projection, which exploits their connection, is presented in \cite{xu2011optimal} and can construct a local optimum of $||\Delta_{a}(s)||_{\mathcal{H}_2}^2$. Some of the numerical weaknesses of this algorithm have been addressed in \cite{benner2011sparse} to enhance its robustness. The convergence of IRKA slows down as the number of inputs and outputs increases. To mitigate this, trust region-based methods have been proposed to expedite the convergence of IRKA, such as those discussed in \cite{beattie2009trust}. Moreover, a family of algorithms based on manifold theory has been reported in the literature, capable of constructing a local optimum of $||\Delta_a(s)||_{\mathcal{H}_2}^2$, as explored in \cite{jiang2017h2,sato2017structure,sato2015riemannian,yang2019trust}. Additionally, there are suboptimal algorithms with additional properties, such as stability preservation, as reported in \cite{panzer2013greedy,wolf2013h,ibrir2018projection}.

In the frequency-limited MOR problem, the primary objective is to achieve superior accuracy within a specific desired frequency interval, rather than aiming for overall accuracy across the entire frequency range. Errors occurring outside this interval are considered tolerable. Many MOR problems inherently exhibit frequency-limited nature because certain frequency intervals are of greater importance. For example, when creating a reduced-order model (ROM) for a notch filter, minimizing the approximation error near the notch frequency becomes crucial. To ensure closed-loop stability, the ROM of the plant must accurately capture the system's behavior in the crossover frequency region \cite{enns1984model}. In interconnected power systems, the presence of low-frequency oscillations is critical for small-signal stability studies. Thus, the ROM of interconnected power systems should accurately represent the behavior within the frequency range that encompasses inter-area and inter-plant oscillations \cite{zulfiqar2021h2}. Gawronski et al. \cite{gawronski1990model} extended the balanced truncation (BT) method to address the frequency-limited MOR problem. The applicability of FLBT \cite{gawronski1990model} is primarily limited to medium-scale systems. To extend FLBT to large-scale systems, Benner et al. \cite{benner2016frequency} introduced low-rank solutions of Lyapunov equations. Furthermore, the relative-error case of the frequency-limited MOR problem is explored in \cite{shaker2008frequency}, where the balanced stochastic truncation (BST) method is generalized to obtain frequency-limited BST (FLBST). However, the applicability of FLBST is mainly confined to small-scale systems due to the requirement of dense Lyapunov and Ricatti equation solutions.

The definition of the frequency-limited $\mathcal{H}_{2}$ norm ($\mathcal{H}_{2,\omega}$ norm) is introduced in \cite{petersson2014model}, along with necessary conditions based on gramians for achieving a local optimum of $||\Delta_a(s)||_{\mathcal{H}_{2,\omega}}^2$. Generalizations of the iterative rational Krylov algorithm (IRKA) are proposed in \cite{vuillemin2013h2,zulfiqar2021frequency}, resulting in a frequency-limited IRKA (FLIRKA) algorithm that approximately satisfies these necessary conditions. The interpolation-based optimality conditions are derived in \cite{vuillemin2014frequency}, and their equivalence to the gramian-based conditions is established in \cite{zulfiqar2021h2,zulfiqar2019adaptive}. A specific interpolation algorithm that fulfills a subset of these optimality conditions while ensuring stability is put forth in \cite{zulfiqar2019adaptive}. Notably, there is currently no algorithm in the literature that specifically addresses the minimization of $||\Delta_r(s)||_{\mathcal{H}_{2,\omega}}$. This research gap serves as the motivation for the present study.

This paper focuses on deriving necessary conditions, based on gramians, for achieving the local optimum of $||\Delta_r(s)||_{\mathcal{H}_{2,\omega}}^2$. It is demonstrated that constructing a local optimum of $||\Delta_{r}(s)||_{\mathcal{H}_{2,\omega}}^2$ within an oblique projection framework is not possible. However, the paper proposes an efficient oblique projection algorithm that specifically targets the local optimum of $||\Delta_{r}(s)||_{\mathcal{H}_{2,\omega}}^2$, offering high fidelity. Unlike FLBST, the proposed algorithm only requires solving small-scale Lyapunov and Ricatti equations, even in large-scale scenarios. The main computational burden lies in solving sparse-dense Sylvester equations, which are commonly encountered in most $\mathcal{H}_2$-optimal MOR algorithms. Unlike the large-scale Lyapunov and Ricatti equations, these sparse-dense Sylvester equations can be solved efficiently. The proposed algorithm is used to design reduced-order controllers for three benchmark high-order systems, widely utilized in the literature for evaluating MOR algorithms. Comparisons are made between the accuracy of the proposed algorithm and that of FLBT, FLBST, and FLIRKA. Additionally, the robust stability of the reduced-order controllers developed for the ROMs generated by these algorithms is examined. Numerical findings consistently demonstrate the superiority of the proposed algorithm over existing algorithms.
\section{Preliminaries}
Consider an $n$-th order $m\times m$ stable square linear time-invariant system denoted by $H$. Within the desired frequency interval $[0,\Omega]$ rad/sec, the frequency-limited controllability gramian $P$ and the frequency-limited observability gramian $Q$ of the state-space realization $(A,B,C)$ can be expressed as integral equations, i.e.,
\begin{align}
P=\frac{1}{2\pi}\int_{-\Omega}^{\Omega}(j\nu I-A)^{-1}BB^T(j\nu I-A)^{-*}d\nu,\nonumber\\
Q=\frac{1}{2\pi}\int_{-\Omega}^{\Omega}(j\nu I-A)^{-*}C^TC(j\nu I-A)^{-1}d\nu.\nonumber
\end{align}
We can define $S_\omega[A]$ as $\frac{j}{2\pi}\ln\big((A+j\Omega I)(A-j\Omega I)^{-1}\big)$, where $(\cdot)^*$ represents Hermitian. By utilizing this definition, the controllability gramian $P$ and the observability gramian $Q$ satisfy the Lyapunov equations expressed as
\begin{align}
AP+PA^T+S_\omega[A]BB^T+BB^TS_\omega[A^T]&=0,\nonumber\\
A^TQ+QA+S_\omega[A^T]C^TC+C^TCS_\omega[A]&=0,\nonumber
\end{align} respectively. For more information, please refer to \cite{gawronski1990model,petersson2014model}.
\begin{definition}
The frequency-limited $\mathcal{H}_2$ norm of $H(s)$ within the desired frequency interval $[0,\Omega]$ rad/sec, as defined in \cite{petersson2014model,vuillemin2014frequency}, can be expressed as
\begin{align}
||H(s)||_{\mathcal{H}_{2,\omega}}&=\sqrt{\frac{1}{2\pi}trace\Big(\int_{-\Omega}^{\Omega}H(j\nu)H^{*}(j\nu)d\nu\Big)},\nonumber\\
&=\sqrt{trace\Big(CPC^T+2\big(CS_\omega[A]B+D\frac{\Omega}{2\pi}\big)D^T\Big)},\nonumber\\
&=\sqrt{\frac{1}{2\pi}trace\Big(\int_{-\Omega}^{\Omega}H^*(j\nu)H(j\nu)d\nu}\Big),\nonumber\\
&=\sqrt{trace\Big(B^TQB+2\big(CS_\omega[A]B+D\frac{\Omega}{2\pi}\big)D^T\Big)}.\nonumber
\end{align}
\end{definition}
\subsection{Problem Formulation}
The state-space matrices of the ROM within an oblique projection framework can be obtained from the original model using the following computations:
\begin{align}
\tilde{A}=\tilde{W}^TA\tilde{V},\quad \tilde{B}=\tilde{W}^TB,\quad \tilde{C}=C\tilde{V},\quad \tilde{D}=D.\nonumber
\end{align}
In these equations, $\tilde{V}\in\mathbb{R}^{n\times r}$, $\tilde{W}\in\mathbb{R}^{n\times r}$, $\tilde{W}^T\tilde{V}=I$, and the columns of $\tilde{V}$ form an $r$-dimensional subspace along with the kernels of $\tilde{W}^T$. Moreover, the matrix $\Pi=\tilde{V}\tilde{W}^T$ serves as an oblique projector \cite{antoulas2005approximation}.

In the context of oblique projection, the primary objective of the relative error frequency-limited $\mathcal{H}_2$ MOR problem is to compute the reduction matrices $\tilde{V}\in\mathbb{R}^{n\times r}$ and $\tilde{W}\in\mathbb{R}^{n\times r}$. These matrices are chosen to ensure that the ROM $\tilde{H}(s)$ yields a small value of $||\Delta_{r}(s)||_{\mathcal{H}_{2,\omega}}$. Essentially, the problem aims to minimize $||\Delta_{r}(s)||_{\mathcal{H}_{2,\omega}}$ for a ROM $\tilde{H}(s)$ with an order of $r$, i.e.,
\begin{align}
\underset{\substack{\tilde{H}(s)\\\textnormal{order}=r}}{\text{min}}||\Delta_{r}(s)||_{\mathcal{H}_{2,\omega}}.\nonumber
\end{align}
\subsection{Existing Oblique Projection Algorithms}
The following subsection offers a brief summary and examination of three oblique projection algorithms that are specifically related to frequency-limited MOR. These algorithms have direct relevance to the problem at hand.
\subsubsection{Frequency-limited Balanced Truncation (FLBT)}
In FLBT \cite{gawronski1990model}, the computation of reduction matrices involves applying the contragradient transformation to $P$ and $Q$. This computation yields $\tilde{W}^TP\tilde{W}=\tilde{V}^TQ\tilde{V}= diag(\sigma_1,\cdots,\sigma_r)$, where $\sigma_{i+1}\geq\sigma_i$. As a result, the ROM obtained through this process ensures that $\Delta_{a}(j\omega)$ remains small within the desired frequency interval of $[0,\Omega]$ rad/sec.
\subsubsection{Frequency-limited Balanced Stochastic Truncation (FLBST)}
The controllability gramian $P_c$ of the pair $(A,B)$ solves the Lyapunov equation
\begin{align}
AP_c+P_cA^T+BB^T=0.\nonumber
\end{align} By defining $B_f$ and $A_f$ as $B_f=P_cC^T+BD^T$ and $A_f=A-B_f(DD^T)^{-1}C$, respectively, we can obtain $X_f$ by solving the Riccati equation
\begin{align}
A_f^TX_f+X_f A_f+X_f B_f(DD^T)^{-1}B_f^TX_f+C^T(DD^T)^{-1}C=0.\nonumber
\end{align}
 The frequency-limited observability gramian $Q_\Omega$ of the pair $\big(A,D^{-1}(C-B_f^TX_f)\big)$ satisfies the Lyapunov equation
\begin{align}
A^TQ_\Omega+Q_\Omega A&+S_\omega[A^T]\big(D^{-1}(C-B_f^TX_f)\big)^T\big(D^{-1}(C-B_f^TX_f)\big)\nonumber\\
&+\big(D^{-1}(C-B_f^TX_f)\big)^T\big(D^{-1}(C-B_f^TX_f)\big)S_\omega[A]=0.\nonumber
\end{align}
The reduction matrices in FLBST \cite{shaker2008frequency} are obtained through the contragradient transformation of $P$ and $Q_\Omega$ as $\tilde{W}^TP\tilde{W}=\tilde{V}^TQ_\Omega\tilde{V}= diag(\sigma_1,\cdots,\sigma_r)$, where $\sigma_{i+1}\geq\sigma_i$. Consequently, the resulting ROM guarantees small values of $\Delta_{r}(j\omega)$ within the desired frequency interval of $[0,\Omega]$ rad/sec.
\subsubsection{Frequency-limited Iterative Rational Krylov Algorithm (FLIRKA)}
Let $P_{12}$ and $Y_\Omega$ satisfy the following Sylvester equations
\begin{align}
AP_{12}+P_{12}\tilde{A}^T+S_\omega[A]B\tilde{B}^T+B\tilde{B}^TS_\omega[\tilde{A}^T]&=0,\label{e6}\\
A^TY_\Omega+Y_\Omega\tilde{A}-S_\omega[A^T]C^T\tilde{C}-C^T\tilde{C}S_\omega[\tilde{A}]&=0.
\end{align} The reduction matrices in FLIRKA \cite{vuillemin2013h2} are iteratively updated starting from an arbitrary guess $(\tilde{A},\tilde{B},\tilde{C})$. Specifically, $\tilde{V}$ is set to $P_{12}$ and $\tilde{W}$ is set to $Y_\Omega(P_{12}^TY_\Omega)^{-1}$. Upon convergence, a ROM is obtained, ensuring small values of $||\Delta_a(s)||_{\mathcal{H}_{2,\omega}}$.
\section{Main Results}
This section focuses on formulating a state-space representation for $\Delta_r$ and presenting an expression for $||\Delta_r(s)||_{\mathcal{H}_{2,\omega}}$. Subsequently, we derive the necessary conditions that must be satisfied to obtain a local optimum for $||\Delta_r(s)||_{\mathcal{H}_{2,\omega}}^2$. Based on these optimality conditions, we introduce an oblique projection algorithm aimed at constructing a local optimum.
\subsection{State-space Realization and $\mathcal{H}_{2,\omega}$ norm of $\Delta_r$}
Consider a stable minimum phase system represented by $\tilde{H}(s)$. In this case, $\Delta_r(s)$ can be written as $\Delta_r(s) = \tilde{H}^{-1}(s)\Delta_a(s)$. When matrix $D$ has full rank, a state-space realization $\tilde{H}^{-1}(s)$ can be obtained using the state-space realization $(\tilde{A}, \tilde{B}, \tilde{C}, D)$ of $\tilde{H}(s)$. This realization is given by:
\begin{align}
A_i&=\tilde{A}-\tilde{B}D^{-1}\tilde{C},& B_i&=-\tilde{B}D^{-1},& C_i&=D^{-1}\tilde{C},& D_i&=D^{-1},\nonumber
\end{align}cf. \cite{zhou1995frequency,zhou1996robust}. Furthermore, a state-space realization of $\tilde{H}^{-1}(s)\Delta_a(s)$ can be expressed as:
\begin{align}
A_r&=\begin{bmatrix}A&0&0\\0&\tilde{A}&0\\B_iC& -B_i\tilde{C}&A_i\end{bmatrix},&B_r&=\begin{bmatrix}B\\\tilde{B}\\0\end{bmatrix},\nonumber\\
C_r&=\begin{bmatrix}D_iC&-D_i\tilde{C}&C_i\end{bmatrix}.\nonumber
\end{align}
The triangular structure of the matrix $A_r$ allows us to express the matrix logarithm $S_\omega[A_r]$ as:
\begin{align}
S_\omega[A_{r}]=\begin{bmatrix}S_\omega[A]&0&0\\0& S_\omega[\tilde{A}]&0\\S_{\omega,1} &S_{\omega,2} &S_\omega[A_i]\end{bmatrix},\nonumber
\end{align}cf. \cite{cheng2001approximating}. Since $A_rS_\omega[A_r] = S_\omega[A_r]A_r$, cf. \cite{higham2008functions}, we have the following relationship:
\begin{align}
\begin{bsmallmatrix}\star&&&&\star&&\star\\\star&&&&\star&&\star\\A_iS_{\omega,1}-S_{\omega,1}A+B_iCS_\omega[A]-S_\omega[A_i]B_iC&&&&A_iS_{\omega,2}-S_{\omega,2}\tilde{A}-B_i\tilde{C}S_\omega[\tilde{A}]+S_\omega[A_i]B_i\tilde{C}&&\star\end{bsmallmatrix}=0.\nonumber
\end{align} Consequently, $S_{\omega,1}$ and $S_{\omega,2}$ satisfy the following Sylvester equations:
\begin{align}
A_iS_{\omega,1}-S_{\omega,1}A+B_iCS_\omega[A]-S_\omega[A_i]B_iC&=0,\label{e8}\\
A_iS_{\omega,2}-S_{\omega,2}\tilde{A}-B_i\tilde{C}S_\omega[\tilde{A}]+S_\omega[A_i]B_i\tilde{C}&=0.\label{e9}
\end{align}
\begin{lemma}
The matrix $S_{\omega,2}$ is equal to $S_\omega[\tilde{A}] - S_\omega[A_i]$.
\end{lemma}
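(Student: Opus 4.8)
The plan rests on one algebraic observation: from $A_i=\tilde{A}-\tilde{B}D^{-1}\tilde{C}$ and $B_i=-\tilde{B}D^{-1}$ one gets $B_i\tilde{C}=A_i-\tilde{A}$, equivalently $A_i=\tilde{A}+B_i\tilde{C}$. Everything else is bookkeeping. The natural first step, since \eqref{e9} is a Sylvester equation for $S_{\omega,2}$, is to substitute the candidate $X:=S_\omega[\tilde{A}]-S_\omega[A_i]$ and check it solves \eqref{e9}. Because $S_\omega[A_i]$ is a primary matrix function of $A_i$ it commutes with $A_i$, and likewise $S_\omega[\tilde{A}]$ commutes with $\tilde{A}$; hence
\[
A_iX-X\tilde{A}=(A_i-\tilde{A})S_\omega[\tilde{A}]+S_\omega[A_i](\tilde{A}-A_i)=B_i\tilde{C}S_\omega[\tilde{A}]-S_\omega[A_i]B_i\tilde{C},
\]
so the two remaining terms of \eqref{e9}, namely $-B_i\tilde{C}S_\omega[\tilde{A}]+S_\omega[A_i]B_i\tilde{C}$, cancel them exactly, and $X$ is a solution.

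It then remains to identify $S_{\omega,2}$ with $X$. The map $Z\mapsto A_iZ-Z\tilde{A}$ is invertible iff $\sigma(A_i)\cap\sigma(\tilde{A})=\varnothing$, i.e.\ iff the transmission zeros of $\tilde{H}$ (eigenvalues of $A_i$) and its poles (eigenvalues of $\tilde{A}$) are disjoint — automatic for a minimal SISO realization and the generic situation in the MIMO case — and under that hypothesis $S_{\omega,2}=X$. To avoid the hypothesis altogether I would instead argue by block-diagonalization: the rows and columns of $A_r$ in the last two block positions form $M:=\begin{bmatrix}\tilde{A}&0\\-B_i\tilde{C}&A_i\end{bmatrix}=\begin{bmatrix}\tilde{A}&0\\\tilde{A}-A_i&A_i\end{bmatrix}$, and with $T:=\begin{bmatrix}I&0\\I&I\end{bmatrix}$ a direct multiplication gives $MT=T\,\mathrm{diag}(\tilde{A},A_i)$, so $M=T\,\mathrm{diag}(\tilde{A},A_i)\,T^{-1}$. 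Since $S_\omega[\cdot]$ is built from resolvents at $\pm j\Omega$ and a matrix logarithm it is similarity-equivariant, so $S_\omega[M]=T\,\mathrm{diag}\big(S_\omega[\tilde{A}],S_\omega[A_i]\big)\,T^{-1}$, whose $(2,1)$ block is $S_\omega[\tilde{A}]-S_\omega[A_i]$. As $A_r$ is block lower triangular with leading block $A$ and trailing block $M$, $S_\omega[A_r]$ is block lower triangular with trailing block $S_\omega[M]$ — exactly the structure already displayed for $S_\omega[A_r]$ — and its $(3,2)$ block $S_{\omega,2}$ equals the $(2,1)$ block of $S_\omega[M]$, i.e.\ $S_\omega[\tilde{A}]-S_\omega[A_i]$.

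The computations are routine; the only point requiring care is well-definedness and uniqueness. One must know that $S_\omega[\tilde{A}]$, $S_\omega[A_i]$, and $S_\omega[A_r]$ all exist, which follows from $\tilde{H}$ being stable and minimum phase (so that $\pm j\Omega$ is an eigenvalue of none of $\tilde{A}$, $A_i$, $A_r$); and, if one closes the argument through the Sylvester route, one additionally needs the pole/zero disjointness of $\tilde{H}$ to pin down the solution. Because the similarity argument needs no disjointness assumption, I would make it the primary proof and keep the substitution into \eqref{e9} only as a consistency check.
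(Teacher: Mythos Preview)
Your first route---substituting $X=S_\omega[\tilde{A}]-S_\omega[A_i]$ into \eqref{e9} via the commutation of a primary matrix function with its argument and the identity $B_i\tilde{C}=A_i-\tilde{A}$---is exactly what the paper does: it rewrites the right-hand side of \eqref{e9} until it reads $A_iX-X\tilde{A}$ and then appeals to uniqueness of the Sylvester solution. The paper, however, invokes uniqueness without stating the spectral hypothesis $\sigma(A_i)\cap\sigma(\tilde{A})=\varnothing$; you are right to flag that this is needed for the Sylvester route to close.

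Your second route is genuinely different and strictly cleaner. Diagonalizing the trailing $2\times2$ block $M$ of $A_r$ by $T=\begin{bmatrix}I&0\\I&I\end{bmatrix}$ and using similarity-equivariance of $S_\omega[\cdot]$ reads off $S_{\omega,2}$ directly as the $(2,1)$ block of $S_\omega[M]=T\,\mathrm{diag}(S_\omega[\tilde{A}],S_\omega[A_i])\,T^{-1}$, with no spectral disjointness needed. This buys you an unconditional proof (under the standing stability/minimum-phase assumptions that make $S_\omega[\tilde{A}]$ and $S_\omega[A_i]$ well defined), whereas the paper's argument tacitly assumes the poles and transmission zeros of $\tilde{H}$ are disjoint. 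Keeping the similarity computation as the main proof and the Sylvester check as a sanity step, as you propose, is the right call.
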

\begin{proof}
We start by considering the equation:
\begin{align}
A_iS_{\omega,2}-S_{\omega,2}\tilde{A}&=B_i\tilde{C}S_\omega[\tilde{A}]-S_\omega[A_i]B_i\tilde{C}.\nonumber
\end{align}
Since $A_iS_\omega[A_i] - S_\omega[A_i]A_i = 0$, we observe that:
\begin{align}
-S_\omega[A_i]B_i\tilde{C}=-B_i\tilde{C}S_\omega[A_i]+S_\omega[A_i]\tilde{A}-\tilde{A}S_\omega[A_i].\nonumber
\end{align}
Thus, we have:
\begin{align}
A_iS_{\omega,2}-S_{\omega,2}\tilde{A}&=B_i\tilde{C}S_\omega[\tilde{A}]-B_i\tilde{C}S_\omega[A_i]+S_\omega[A_i]\tilde{A}-\tilde{A}S_\omega[A_i]\nonumber\\
&=B_i\tilde{C}S_\omega[\tilde{A}]-B_i\tilde{C}S_\omega[A_i]+S_\omega[A_i]\tilde{A}-\tilde{A}S_\omega[A_i]\nonumber\\
&\hspace*{2cm}+\tilde{A}S_\omega[\tilde{A}]-S_\omega[\tilde{A}]\tilde{A}\nonumber\\
&=A_i(S_\omega[\tilde{A}]-S_\omega[A_i])-(S_\omega[\tilde{A}]-S_\omega[A_i])\tilde{A}.\nonumber
\end{align}
By the uniqueness of the solution to the Sylvester equation (\ref{e9}), we conclude that $S_{\omega,2} = S_\omega[\tilde{A}] - S_\omega[A_i]$.
\end{proof}

Let $P_{r}=\begin{bmatrix}P&P_{12}&P_{13}\\P_{12}^T&P_{22}&P_{23}\\P_{13}^T&P_{23}^T&P_{33}\end{bmatrix}$ represent the frequency-limited controllability gramian of the pair $(A_{r},B_{r})$. This gramian satisfies the Lyapunov equation \eqref{e10}, given by
\begin{align}
A_{r}P_{r}+P_{r}A_{r}^T+S_\omega[A_{r}]B_{r}B_{r}^T+B_{r}B_{r}^TS_\omega[A_{r}^T]&=0.\label{e10}
\end{align} By expanding equation \eqref{e10}, we observe that $P_{13}$, $P_{22}$, $P_{23}$, and $P_{33}$ solve the equations \eqref{e11}, \eqref{e12}, \eqref{e13}, and \eqref{e14} respectively, with the corresponding expressions for $X_{13}$, $X_{22}$, $X_{23}$, and $X_{33}$, i.e.,
\begin{align}
AP_{13}+P_{13}A_i^T+X_{13}&=0,\label{e11}\\
\tilde{A}P_{22}+P_{22}\tilde{A}^T+X_{22}&=0,\label{e12}\\
\tilde{A}P_{23}+P_{23}A_i^T+X_{23}&=0,\label{e13}\\
A_iP_{33}+P_{33}A_i^T+X_{33}&=0,\label{e14}
\end{align}wherein
\begin{align}
X_{13}&=P_{11}C^TB_i^T-P_{12}\tilde{C}^TB_i^T+BB^TS_{\omega,1}^T+B\tilde{B}^TS_{\omega,2}^T,\nonumber\\
X_{22}&=S_\omega[\tilde{A}]\tilde{B}\tilde{B}^T+\tilde{B}\tilde{B}^TS_\omega[\tilde{A}^T],\nonumber\\
X_{23}&=P_{12}^TC^TB_i^T-P_{22}\tilde{C}^TB_i^T+\tilde{B}B^TS_{\omega,1}^T+\tilde{B}\tilde{B}^TS_{\omega,2}^T,\nonumber\\
X_{33}&=B_iCP_{13}-B_i\tilde{C}P_{23}+P_{13}^TC^TB_i^T-P_{23}^T\tilde{C}^TB_i^T.\nonumber
\end{align}
Let $Q_{r}=\begin{bmatrix}Q_{11}&Q_{12}&Q_{13}\\Q_{12}^T&Q_{22}&Q_{23}\\Q_{13}^T&Q_{23}&Q_{33}\end{bmatrix}$ represent the frequency-limited observability gramian of the pair $(A_{r},C_{r})$. This gramian satisfies the Lyapunov equation \eqref{e15}, expressed as
\begin{align}
A_{r}^TQ_{r}+Q_{r}A_{r}+S_\omega[A_{r}^T]C_{r}^TC_{r}+C_{r}^TC_{r}S_\omega[A_{r}]&=0.\label{e15}
\end{align}
By expanding equation \eqref{e15}, we find that $Q_{11}$, $Q_{12}$, $Q_{13}$, $Q_{22}$, $Q_{23}$, and $Q_{33}$ satisfy equations \ref{e16}, \ref{e17}, \ref{e18}, \ref{e19}, \ref{e20}, and \ref{e21} respectively, with the corresponding expressions for $Y_{11}$, $Y_{12}$, $Y_{13}$, $Y_{22}$, $Y_{23}$, and $Y_{33}$, i.e.,
\begin{align}
A^TQ_{11}+Q_{11}A+Y_{11}&=0,\label{e16}\\
A^TQ_{12}+Q_{12}\tilde{A}+Y_{12}&=0,\label{e17}\\
A^TQ_{13}+Q_{13}A_i+Y_{13}&=0,\label{e18}\\
\tilde{A}^TQ_{22}+Q_{22}\tilde{A}+Y_{22}&=0,\label{e19}\\
\tilde{A}^TQ_{23}+Q_{23}A_i+Y_{23}&=0,\label{e20}\\
A_i^TQ_{33}+Q_{33}A_i+Y_{33}&=0,\label{e21}
\end{align}wherein
\begin{align}
Y_{11}&=C^TB_i^TQ_{13}^T+Q_{13}B_iC+S_\omega[A^T]C^TD_i^TD_iC+C^TD_i^TD_iCS_\omega[A]\nonumber\\
&\hspace*{2.25cm}+S_{\omega,1}^TC_i^TD_iC+C^TD_i^TC_iS_{\omega,1},\nonumber\\
Y_{12}&=C^TB_i^TQ_{23}-Q_{13}B_i\tilde{C}-S_\omega[A^T]C^TD_i^TD_i\tilde{C}-C^TD_i^TD_i\tilde{C}S_\omega[\tilde{A}]\nonumber\\
&\hspace*{2.25cm}-S_{\omega,1}^TC_i^TD_i\tilde{C}+C^TD_i^TC_iS_{\omega,2},\nonumber\\
Y_{13}&=C^TB_i^TQ_{33}+S_\omega[A^T]C^TD_i^TC_i+C^TD_i^TC_iS_\omega[A_i]+S_{\omega,1}^TC_i^TC_i,\nonumber\\
Y_{22}&=-\tilde{C}^TB_i^TQ_{23}-Q_{23}B_i\tilde{C}+S_\omega[\tilde{A}^T]\tilde{C}^TD_i^TD_i\tilde{C}+\tilde{C}^TD_i^TD_i\tilde{C}S_\omega[\tilde{A}]\nonumber\\
&\hspace*{2.25cm}-S_{\omega,2}^TC_i^TD_i\tilde{C}-\tilde{C}^TD_i^TC_iS_{\omega,2},\nonumber\\
Y_{23}&=-\tilde{C}^TB_i^TQ_{33}-S_\omega[\tilde{A}^T]\tilde{C}^TD_i^TC_i-\tilde{C}^TD_i^TC_iS_\omega[A_i]+S_{\omega,2}^TC_i^TC_i,\nonumber\\
Y_{33}&=S_\omega[A_i^T]C_i^TC_i+C_i^TC_iS_\omega[A_i].\nonumber
\end{align}
\begin{proposition}
The matrix $Q_{12}$ is equal to $-Q_{13}$, and the matrix $Q_{22}$ is equal to $-Q_{23}$ and $Q_{33}$.
\end{proposition}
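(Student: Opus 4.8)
The plan is to verify each of the three claimed identities by substituting it as an ansatz into the Sylvester/Lyapunov equation that defines the relevant block of $Q_r$ and checking consistency. Since each of \eqref{e17}, \eqref{e18}, \eqref{e19}, \eqref{e20}, \eqref{e21} is a Sylvester equation whose coefficient matrices are drawn from $\{A,\tilde{A},A_i\}$ --- all of which are Hurwitz, because $H$ and $\tilde{H}$ are stable and $\tilde{H}$ is minimum phase (so $A_i=\tilde{A}-\tilde{B}D^{-1}\tilde{C}$ is stable) --- each such equation has a unique solution, and hence it suffices to show that the proposed block satisfies it. Throughout I would use the structural identities $D_i=D^{-1}$, $B_i=-\tilde{B}D^{-1}$, $C_i=D_i\tilde{C}$, and $A_i=\tilde{A}+B_i\tilde{C}$ (equivalently $\tilde{A}-A_i=-B_i\tilde{C}$), together with their immediate consequences $\tilde{C}^TD_i^TD_i\tilde{C}=C_i^TC_i=\tilde{C}^TD_i^TC_i=C_i^TD_i\tilde{C}$ and $C^TD_i^TD_i\tilde{C}=C^TD_i^TC_i$, and --- crucially --- the already established fact that $S_{\omega,2}=S_\omega[\tilde{A}]-S_\omega[A_i]$.

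I would carry out the verifications in the order $Q_{23}=-Q_{33}$, then $Q_{22}=Q_{33}$, then $Q_{12}=-Q_{13}$. For the first, substitute $Q_{23}=-Q_{33}$ into \eqref{e20}; using $\tilde{A}^T-A_i^T=-\tilde{C}^TB_i^T$ and eliminating $Q_{33}A_i$ via \eqref{e21}, the required identity collapses to $\big(S_\omega[A_i^T]+S_{\omega,2}^T-S_\omega[\tilde{A}^T]\big)C_i^TC_i=0$, which holds by the preceding Lemma. For the second, substitute $Q_{22}=Q_{33}$ into \eqref{e19}; since $Q_{23}=-Q_{33}$ is now known, $Y_{22}$ simplifies, and after using $\tilde{A}=A_i-B_i\tilde{C}$ and \eqref{e21} the identity reduces to $\big(S_\omega[\tilde{A}^T]-S_{\omega,2}^T-S_\omega[A_i^T]\big)C_i^TC_i+C_i^TC_i\big(S_\omega[\tilde{A}]-S_{\omega,2}-S_\omega[A_i]\big)=0$, again true by the Lemma; together with the first step this gives $Q_{22}=-Q_{23}=Q_{33}$. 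For the third, substitute $Q_{12}=-Q_{13}$ into \eqref{e17}, use $Q_{23}=-Q_{33}$ to rewrite $Y_{12}$, eliminate $A^TQ_{13}$ via \eqref{e18}, and with $A_i-\tilde{A}=B_i\tilde{C}$ the whole thing reduces to $C^TD_i^TC_i\big(S_{\omega,2}+S_\omega[A_i]-S_\omega[\tilde{A}]\big)=0$ --- once more the Lemma --- so that uniqueness of the solution of \eqref{e17} yields $Q_{12}=-Q_{13}$.

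A more conceptual alternative that I would keep in reserve is a constant state-space coordinate change that replaces the third block of states $x_3$ by $e:=x_3-x_2$. Because $A_i-\tilde{A}=B_i\tilde{C}$, the $e$-dynamics lose all dependence on $x_2$, and because $C_i=D_i\tilde{C}$ the output becomes $D_iCx_1+C_ie$, which also no longer sees $x_2$; thus in the new coordinates $\bar{C}_r=[\,D_iC\ \ 0\ \ C_i\,]$ and $\bar{A}_r$ is block lower triangular with the middle ($\tilde{A}$) block completely isolated and unobservable. Hence the frequency-limited observability gramian $\bar{Q}_r$ has a zero middle block row and column, and transforming back via the invertible similarity $T=\left[\begin{smallmatrix}I&0&0\\0&I&0\\0&-I&I\end{smallmatrix}\right]$ --- using that the frequency-limited observability gramian transforms as $\bar{Q}_r=T^{-T}Q_rT^{-1}$, which itself rests on $S_\omega[TA_rT^{-1}]=TS_\omega[A_r]T^{-1}$ --- reproduces exactly the pattern $Q_{12}=-Q_{13}$, $Q_{22}=-Q_{23}=Q_{33}$.

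I expect the main obstacle to be purely the bookkeeping in the direct route: the matrices $Y_{11},\dots,Y_{33}$ each carry several terms, and one must consistently recognize when a product equals $C_i^TC_i$ versus $\tilde{C}^TD_i^TD_i\tilde{C}$ versus $C^TD_i^TC_i$, and keep the sign conventions straight (the minus signs in $B_i=-\tilde{B}D^{-1}$ and in the $(1,3)$ and $(2,3)$ blocks of $C_r$ and $A_r$). In the alternative route the only delicate point is arguing that $\bar{Q}_r$ genuinely has the zero middle block --- i.e. that the isolated $\tilde{A}$-subsystem contributes nothing because it is simultaneously output-decoupled and dynamically upstream of nothing --- and, in either route, invoking the minimum-phase hypothesis (so that $A_i$ is Hurwitz) to secure the uniqueness that legitimizes checking an ansatz against the defining equations.
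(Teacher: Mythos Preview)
Your direct route is essentially the paper's own argument, just organized as ``substitute ansatz and verify'' rather than ``add/subtract the defining equations and obtain a homogeneous Sylvester equation for the sum/difference''; the paper simply writes, for instance, that adding \eqref{e20} and \eqref{e21} gives $A_i^T(Q_{23}+Q_{33})+(Q_{23}+Q_{33})A_i=0$, and then invokes uniqueness. The cancellations that make this work are exactly the ones you identify (the structural identities $C_i=D_i\tilde C$, $A_i=\tilde A+B_i\tilde C$, and the Lemma $S_{\omega,2}=S_\omega[\tilde A]-S_\omega[A_i]$), though the paper suppresses that algebra entirely.

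Your reserve argument via the similarity $T$ that shifts $x_3\mapsto x_3-x_2$ is a genuinely different, more conceptual proof that the paper does not give. It explains \emph{why} the pattern $Q_{12}=-Q_{13}$, $Q_{22}=-Q_{23}=Q_{33}$ arises: in the transformed realization the $\tilde A$-block becomes unobservable, so the corresponding row/column of the frequency-limited observability gramian vanishes, and pulling back through $Q_r=T^T\bar Q_rT$ reproduces the identities. This buys insight and avoids the term-by-term bookkeeping, at the cost of having to justify once that the frequency-limited gramian transforms congruently under similarity (which follows from $S_\omega[TA_rT^{-1}]=TS_\omega[A_r]T^{-1}$, as you note). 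Either route is correct.
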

\begin{proof}
To demonstrate this, we can observe that by adding equations (\ref{e20}) and (\ref{e21}), we obtain
\begin{align}
A_i^T(Q_{23}+Q_{33})+(Q_{23}+Q_{33})A_i&=0.\nonumber
\end{align}
Consequently, it follows that $Q_{23}+Q_{33}=0$ and $Q_{33}=-Q_{23}$. Similarly, by subtracting equation (\ref{e19}) from (\ref{e21}), we find
\begin{align}
A_i^T(Q_{33}-Q_{22})+(Q_{33}-Q_{22})A_i&=0,\nonumber
\end{align}which yields $Q_{33}-Q_{22}=0$ and $Q_{33}=Q_{22}$. Lastly, adding equations (\ref{e16}) and (\ref{e17}) allows us to obtain
\begin{align}
A^T(Q_{12}+Q_{13})+(Q_{12}+Q_{13})\tilde{A}&=0,\nonumber
\end{align}implying $Q_{12}+Q_{13}=0$ and thus $Q_{12}=-Q_{13}$.
\end{proof}
The expression for $||\Delta_{r}(s)||_{\mathcal{H}_{2,\omega}}$ can be derived from the definition of the $\mathcal{H}_{2,\omega}$ norm and written as:
\begin{align}
||\Delta_{r}(s)||_{\mathcal{H}_{2,\omega}}&=\sqrt{trace(C_{r}P_{r}C_{r}^T)}\nonumber\\
&=\big(trace(D_iCPC^TD_i^T-2D_iCP_{12}\tilde{C}^TD_i^T+2D_iCP_{13}C_i^T\nonumber\\
&\hspace*{3cm}+D_i\tilde{C}P_{22}\tilde{C}^TD_i^T-2D_i\tilde{C}P_{23}C_i^T+C_iP_{33}C_i^T)\big)^{\frac{1}{2}}\nonumber\\
&=\sqrt{trace(B_{r}^TQ_{r}B_{r})}\nonumber\\
&=\big(trace(B^TQ_{11}B+2B^TQ_{12}\tilde{B}+\tilde{B}^TQ_{22}\tilde{B})\big)^{\frac{1}{2}}.\nonumber
\end{align}
\subsection{Necessary Conditions for the Local optimum of $||\Delta_r(s)||_{\mathcal{H}_{2,\omega}}^2$}
To ensure clarity, let's begin by introducing the variables required for deriving the necessary conditions for a local optimum of $||\Delta_r(s)||_{\mathcal{H}_{2,\omega}}^2$.  We define $\bar{P}_{11}$, $\bar{P}_{12}$, $\bar{P}_{13}$, $\bar{P}_{22}$, $\bar{P}_{33}$, and $E_1$ as the solutions to the following set of linear matrix equations:
\begin{align}
A\bar{P}_{11}+\bar{P}_{11}A^T+BB^T&=0,\hspace*{0.5cm}\label{e23}\\
A\bar{P}_{12}+\bar{P}_{12}\tilde{A}^T+B\tilde{B}^T&=0,\label{e24}\\
A\bar{P}_{13}+\bar{P}_{13}A_i^T+\bar{P}_{11}C^TB_i^T-\bar{P}_{12}\tilde{C}^TB_i^T&=0,\label{e25}\\
\tilde{A}\bar{P}_{22}+\bar{P}_{22}\tilde{A}^T+\tilde{B}\tilde{B}^T&=0,\label{e26}\\
\tilde{A}\bar{P}_{23}+\bar{P}_{23}A_i^T+\bar{P}_{12}^TC^TB_i^T-\bar{P}_{22}\tilde{C}^TB_i^T&=0,\label{e27}\\
A_i\bar{P}_{33}+\bar{P}_{33}A_i^T+B_iC\bar{P}_{13}-B_i\tilde{C}\bar{P}_{23}+\bar{P}_{13}^TC^TB_i^T-\bar{P}_{23}^T\tilde{C}^TB_i^T&=0,\label{e28}\\
A_i^TE_1-E_1A^T+C_i^TD_iC\bar{P}_{11}-C_i^TD_i\tilde{C}\bar{P}_{12}^T+C_i^TC_i\bar{P}_{13}^T&=0.\label{e22}
\end{align}
These variables will be utilized in the subsequent derivation of the partial derivatives of $||\Delta_r(s)||_{\mathcal{H}_{2,\omega}}^2$ with respect to $\tilde{A}$ and $\tilde{B}$.

For the derivation of the partial derivative of $||\Delta_r(s)||_{\mathcal{H}_{2,\omega}}^2$ with respect to $\tilde{C}$, it is necessary to introduce variables $\bar{Q}_{13}$, $\bar{Q}_{23}$, $\bar{Q}_{33}$, and $E_2$, which represent the solutions to the following set of linear matrix equations:
\begin{align}
A^T\bar{Q}_{13}+\bar{Q}_{13}A_i+C^TB_i^T\bar{Q}_{33}+C^TD_i^TC_i&=0,\label{e30}\\
\tilde{A}^T\bar{Q}_{23}+\bar{Q}_{23}A_i-\tilde{C}^TB_i^T\bar{Q}_{33}-\tilde{C}^TD_i^TC_i&=0,\label{e31}\\
A_i^T\bar{Q}_{33}+\bar{Q}_{33}A_i+C_i^TC_i&=0,\label{e32}\\
A_i^TE_2-E_2A^T+\bar{Q}_{13}^TBB^T+\tilde{B}B^T&=0.\label{e33}
\end{align}

Throughout the derivation of the optimality conditions, we will make frequent use of the following trace properties:
\begin{enumerate}
  \item Addition: $trace(H_1+H_2+H_3)=trace(H_1)+trace(H_2)+trace(H_3)$.
  \item Transpose: $trace(H_1^T)=trace(H_1)$.
  \item Cyclic permutation: $trace(H_1H_2H_3)=trace(H_3H_1H_2)=trace(H_2H_3H_1)$.
  \item Partial derivative: $\Delta_{f(H)}^H=trace\Big(\frac{\partial}{\partial H}\big(f(H)\big)(\Delta_{H})^T\Big)$ where $\Delta_{f(H)}^H$ is the first-order derivative of $f(H)$ with respect to $H$, and $\Delta_{H}$ is the differential of $H$.
\end{enumerate}
These properties can be found in \cite{petersen2008matrix}.

Moving forward, we will make frequent use of the following lemma, which is available in \cite{petersson2013nonlinear} and supports our derivation.
\begin{lemma}\label{lemma}
Let $U$ and $V$ be the solutions of the Sylvester equations:
\begin{align}
RU+US+K&=0,\nonumber\\
SV+VR+L&=0.\nonumber
\end{align}
In this case, we have $trace(KV)=trace(LU)$.
\end{lemma}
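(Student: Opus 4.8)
The plan is to eliminate $K$ and $L$ from the two trace quantities by using the Sylvester equations themselves, and then to reconcile the resulting terms via the cyclic invariance of the trace. First I would rearrange the first equation as $K=-(RU+US)$ and substitute it into $trace(KV)$, which gives $trace(KV)=-trace(RUV)-trace(USV)$. In the same way, the second equation yields $L=-(SV+VR)$, so that $trace(LU)=-trace(SVU)-trace(VRU)$.

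Next I would invoke the cyclic permutation property of the trace on each pair of terms, namely $trace(RUV)=trace(VRU)$ and $trace(USV)=trace(SVU)$. Substituting these identities into the two expressions above makes them coincide: $trace(KV)=-trace(VRU)-trace(SVU)=trace(LU)$, which is exactly the assertion of the lemma. An equivalent route, if one prefers not to substitute, is to right-multiply the first equation by $V$ and the second by $U$, take traces to obtain $trace(RUV)+trace(USV)+trace(KV)=0$ and $trace(SVU)+trace(VRU)+trace(LU)=0$, and subtract; the cross terms cancel by the same cyclic argument, leaving $trace(KV)=trace(LU)$.

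I do not anticipate any genuine obstacle here: the derivation is a two-line manipulation, and the only point needing care is to track the order of the matrix products correctly when applying cyclicity. It is worth emphasizing that the identity uses nothing beyond the fact that $U$ and $V$ satisfy the stated equations — no invertibility or spectrum-separation hypotheses on $R$ and $S$ are required, apart from those implicitly guaranteeing the existence (and uniqueness) of the solutions, which the statement already presumes.
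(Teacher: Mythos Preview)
Your argument is correct: substituting $K=-(RU+US)$ and $L=-(SV+VR)$ into the two traces and matching terms via cyclic invariance is exactly the standard two-line proof. The paper itself does not supply an argument at all---it simply refers the reader to Lemma~4.1 of \cite{petersson2013nonlinear}---so your proposal in fact provides more detail than the paper does, and the approach is the expected one.
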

\begin{proof}
For further details, refer to the proof of Lemma 4.1 in \cite{petersson2013nonlinear}.
\end{proof}

For the sake of convenience in the ensuing derivation, we present the definition of the Fr'{e}chet derivative of the matrix logarithm, which is utilized multiple times.
\begin{definition}
The Fr'{e}chet derivative $L(A,E)$ of the matrix logarithm $ln(A)$ in the direction of matrix $E$ is given by the integral expression:
\begin{align}
L(A,E)=\int_{0}^{1}\big(s(A-I)+I\big)^{-1}E\big(s(A-I)+I\big)^{-1}ds,\nonumber
\end{align}as outlined in \cite{higham2008functions}.
\end{definition}

Finally, we introduce the following variables to conclude the definitions:
\begin{align}
W_1&=-\bar{P}_{12}^TC^TD_i^TC_i+\bar{P}_{13}^TC^TD_i^TC_i+\bar{P}_{22}C_i^TC_i\nonumber\\
&\hspace*{2.65cm}-\bar{P}_{23}^TC_i^TC_i-\bar{P}_{23}C_i^TC_i+\bar{P}_{33}C_i^TC_i,\nonumber\\
W_2&=\tilde{B}B^T\bar{Q}_{13}+\tilde{B}\tilde{B}^T,\nonumber\\
V_1&=Real\Big[\frac{j}{2\pi}L\big(-A_i-j\omega I,W_1\big)\Big],\nonumber\\
V_2&=Real\Big[\frac{j}{2\pi}L\big(-A_i-j\omega I,-B_iCE_1^T\big)\Big],\nonumber\\
V_3&=Real\Big[\frac{j}{2\pi}L\big(-A_i-j\omega I,W_2\big)\Big],\nonumber\\
V_4&=Real\Big[\frac{j}{2\pi}L\big(-A_i-j\omega I,B_iCE_2^T\big)\Big].\nonumber
\end{align}

The theorem presented below outlines the necessary conditions that a local optimum ROM $(\tilde{A},\tilde{B},\tilde{C},D)$ must satisfy.
\begin{theorem}
Assuming that $\tilde{H}(s)$ is a stable minimum phase system and $D$ is full rank, the local optimum $(\tilde{A},\tilde{B},\tilde{C},D)$ of $||\Delta_r(s)||_{\mathcal{H}_{2,\omega}}^2$ must satisfy the following optimality conditions:
\begin{align}
Q_{12}^T\bar{P}_{12}+Q_{22}\bar{P}_{22}+d_1&=0,\label{e34}\\
Q_{12}^TB+Q_{22}\tilde{B}+d_2&=0,\label{e35}\\
-D_i^TD_iCP_{12}+D_i^TD_i\tilde{C}P_{22}+d_3&=0,\label{e36}
\end{align}wherein
\begin{align}
d_1&=Q_{13}^T\bar{P}_{13}+Q_{23}\bar{P}_{23}+Q_{23}\bar{P}_{23}^T+Q_{33}\bar{P}_{33}+E_1S_{\omega,1}^T-V_1^T-V_2^T,\nonumber\\
d_2&=-Q_{13}^T\bar{P}_{11}C^TD_i^T+Q_{13}^T\bar{P}_{12}C_i^T-Q_{13}^T\bar{P}_{13}C_i^T-Q_{23}\bar{P}_{12}^TC^TD_i^T\nonumber\\
&\hspace*{0.5cm}-Q_{23}\bar{P}_{23}C_i^T+Q_{23}\bar{P}_{22}C_i^T-Q_{33}\bar{P}_{13}^TC^TD_i^T+Q_{33}\bar{P}_{23}^TC_i^T\nonumber\\
&\hspace*{0.5cm}-Q_{33}\bar{P}_{33}C_i^T+V_1^TC_i^T-V_2^TC_i^T-E_1S_{\omega,1}^TC_i^T-E_1S_\omega[A^T]C^TD_i^T\nonumber\\
&\hspace*{0.5cm}+S_\omega[A_i^T]E_1C^TD_i^T,\nonumber\\
d_3&=D_i^TD_iCP_{13}-D_i^TD_i\tilde{C}P_{23}-D_i^TC_iP_{23}^T+D_i^TC_iP_{33}+B_i^T\bar{Q}_{13}^TP_{13}\nonumber\\
&\hspace*{0.5cm}-B_i^T\bar{Q}_{13}^TP_{12}+B_i^T\bar{Q}_{23}P_{23}-B_i^T\bar{Q}_{23}P_{22}+B_i^T\bar{Q}_{33}P_{33}-B_i^T\bar{Q}_{33}P_{23}^T\nonumber\\
&\hspace*{0.5cm}+B_i^TE_2S_{\omega,1}^T+B_i^TV_3^T+B_i^TV_4^T.\nonumber
\end{align}
\end{theorem}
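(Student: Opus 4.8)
The plan is to compute the gradient of $J := \|\Delta_r(s)\|_{\mathcal{H}_{2,\omega}}^2 = \mathrm{trace}(C_r P_r C_r^T) = \mathrm{trace}(B_r^T Q_r B_r)$ with respect to the free parameters $\tilde{A}$, $\tilde{B}$, $\tilde{C}$ (with $\tilde{D}=D$ held fixed), set each derivative to zero, and recognize the resulting expressions as (\ref{e34})--(\ref{e36}). Since $J$ depends on $(\tilde{A},\tilde{B},\tilde{C})$ both directly through $B_r$, $C_r$ and indirectly through $A_r$ (hence through $P_r$, $Q_r$, and through $A_i=\tilde A-\tilde B D^{-1}\tilde C$, $B_i=-\tilde B D^{-1}$, $C_i=D^{-1}\tilde C$, and the logarithm blocks $S_{\omega,1},S_{\omega,2},S_\omega[A_i]$), the derivation is a careful but mechanical application of the trace differentiation rules listed in the excerpt. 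I would differentiate the form $\mathrm{trace}(B_r^TQ_rB_r)$ to get the $\tilde A$- and $\tilde B$-conditions and the form $\mathrm{trace}(C_rP_rC_r^T)$ to get the $\tilde C$-condition, choosing in each case whichever gramian makes the algebra lightest.

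The key device is Lemma~\ref{lemma}: whenever a perturbation $\Delta_{\tilde A}$ (etc.) enters $J$ through the gramian $P_r$ (or $Q_r$), the term $\mathrm{trace}(Q_r\,\delta(A_rP_r+P_rA_r^T+\cdots))$ can be rewritten, via the dual Sylvester/Lyapunov equation, as a trace against the ``adjoint'' gramian, eliminating the implicit dependence of $P_r$ on the parameter. This is precisely why the auxiliary quantities $\bar P_{11},\dots,\bar P_{33}$, $\bar Q_{13},\bar Q_{23},\bar Q_{33}$ and the Sylvester solutions $E_1,E_2$ were introduced: the $\bar P$'s solve the ``B-driven'' Lyapunov/Sylvester system obtained by stripping the logarithm weights from (\ref{e10}), the $\bar Q$'s solve the analogous ``C-driven'' system from (\ref{e15}), and $E_1,E_2$ absorb the cross terms coupling the $(1,\cdot)$ block of $A_r$ to the $(3,\cdot)$ block. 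First I would collect the ``direct'' contributions (where the parameter appears in $C_r$ or $B_r$ explicitly), which give the terms $Q_{12}^T\bar P_{12}+Q_{22}\bar P_{22}$, $Q_{12}^TB+Q_{22}\tilde B$, and $-D_i^TD_iCP_{12}+D_i^TD_i\tilde C P_{22}$ that head each condition; then I would process the ``indirect'' contributions block by block, each collapsing through Lemma~\ref{lemma} into one of the many summands bundled into $d_1$, $d_2$, $d_3$.

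The main obstacle will be handling the derivative of the matrix logarithm terms $S_\omega[A_r]$ — specifically the blocks $S_\omega[A_i]$, $S_{\omega,1}$, $S_{\omega,2}$ — with respect to $\tilde A$, $\tilde B$, $\tilde C$, since $A_i$, $B_i$, $C_i$ all move when these parameters move. This is where the Fr\'echet derivative $L(A,E)$ of $\ln(A)$ enters: differentiating $S_\omega[A_i]=\frac{j}{2\pi}\ln\big((A_i+j\Omega I)(A_i-j\Omega I)^{-1}\big)$ produces, after taking real parts, exactly the quantities $V_1,\dots,V_4$ defined via $L(-A_i-j\omega I,\cdot)$; one must also use Lemma~1 ($S_{\omega,2}=S_\omega[\tilde A]-S_\omega[A_i]$) and the Proposition ($Q_{12}=-Q_{13}$, $Q_{22}=Q_{23}=Q_{33}$) to compress the bookkeeping. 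The plan is to isolate every term in $\partial J$ that carries a logarithm-derivative factor, push it through the adjoint relations so that the direction matrix $E$ in each $L(\cdot,E)$ becomes one of $W_1$, $-B_iCE_1^T$, $W_2$, $B_iCE_2^T$, and then identify the accumulated real parts with $V_1^T,\dots,V_4^T$ as they appear in $d_1,d_2,d_3$. Once all direct and indirect pieces are assembled and the stationarity equations $\partial J/\partial\tilde A=0$, $\partial J/\partial\tilde B=0$, $\partial J/\partial\tilde C=0$ are written out, regrouping yields (\ref{e34}), (\ref{e35}), (\ref{e36}), completing the proof.
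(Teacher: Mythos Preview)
Your plan is correct and follows essentially the same approach as the paper: differentiate $J=\mathrm{trace}(B_r^TQ_rB_r)$ for the $\tilde A$- and $\tilde B$-conditions and $J=\mathrm{trace}(C_rP_rC_r^T)$ for the $\tilde C$-condition, then repeatedly apply Lemma~\ref{lemma} against the auxiliary equations defining $\bar P_{ij}$, $\bar Q_{ij}$, $E_1$, $E_2$, and finally handle the $S_\omega[A_i]$, $S_{\omega,1}$, $S_{\omega,2}$ derivatives via the Fr\'echet derivative to produce $V_1,\dots,V_4$. One small correction: for $\partial J/\partial\tilde A$ there are no ``direct'' contributions since $B_r$ is independent of $\tilde A$; the leading terms $Q_{12}^T\bar P_{12}+Q_{22}\bar P_{22}$ instead emerge from the $Q_{12}\Delta_{\tilde A}$ and $Q_{22}\Delta_{\tilde A}$ terms in the differentiated Lyapunov equations for $Q_{12}$ and $Q_{22}$ after applying Lemma~\ref{lemma} (and the Proposition is not actually needed in the derivation).
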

\begin{proof}
Let us define the cost function $J$ as $J=||\Delta_r||_{\mathcal{H}_{2,\omega}}^2$ and denote the first-order derivatives of $Q_{11}$, $Q_{12}$, $Q_{13}$, $Q_{22}$, $Q_{23}$, $Q_{33}$, $S_{\omega,1}$, $S_{\omega,2}$, $S_{\omega}[\tilde{A}]$, $S_{\omega}[A_i]$, and $J$ with respect to $\tilde{A}$ as $\Delta_{Q_{11}}^{\tilde{A}}$, $\Delta_{Q_{12}}^{\tilde{A}}$, $\Delta_{Q_{13}}^{\tilde{A}}$, $\Delta_{Q_{22}}^{\tilde{A}}$, $\Delta_{Q_{23}}^{\tilde{A}}$, $\Delta_{Q_{33}}^{\tilde{A}}$, $\Delta_{S_{\omega,1}}^{\tilde{A}}$, $\Delta_{S_{\omega,2}}^{\tilde{A}}$, $\Delta_{S_{\omega,r}}^{\tilde{A}}$, $\Delta_{S_{\omega,i}}^{\tilde{A}}$, and $\Delta_{J}^{\tilde{A}}$, respectively. Further, let us denote the differential of $\tilde{A}$ as $\Delta_{\tilde{A}}$. By differentiating $J$ with respect to $\tilde{A}$, we get
      \begin{align}
      \Delta_{J}^{\tilde{A}}&=trace(B^T\Delta_{Q_{11}}^{\tilde{A}}B+2B^T\Delta_{Q_{12}}^{\tilde{A}}\tilde{B}+\tilde{B}^T\Delta_{Q_{22}}^{\tilde{A}}\tilde{B})\nonumber\\
      &=trace(BB^T\Delta_{Q_{11}}^{\tilde{A}}+2B\tilde{B}^T(\Delta_{Q_{12}}^{\tilde{A}})^T+\tilde{B}\tilde{B}^T\Delta_{Q_{22}}^{\tilde{A}}).\nonumber
      \end{align}
By differentiating equations (\ref{e16}), (\ref{e17}), and (\ref{e19}) with respect to $\tilde{A}$, we get
      \begin{align}
      A^T\Delta_{Q_{11}}^{\tilde{A}}+\Delta_{Q_{11}}^{\tilde{A}}A+R_{11}&=0,\hspace*{2cm}\label{e40}\\
      A^T\Delta_{Q_{12}}^{\tilde{A}}+\Delta_{Q_{12}}^{\tilde{A}}\tilde{A}+R_{12}&=0,\label{e41}\\
      \tilde{A}^T\Delta_{Q_{22}}^{\tilde{A}}+\Delta_{Q_{22}}^{\tilde{A}}\tilde{A}+R_{22}&=0,\label{e42}
      \end{align}
      wherein
      \begin{align}
      \hspace{1cm}R_{11}&=C^TB_i^T(\Delta_{Q_{13}}^{\tilde{A}})^T+\Delta_{Q_{13}}^{\tilde{A}}B_iC+(\Delta_{S_{\omega,1}}^{\tilde{A}})^TC_i^TD_iC+C^TD_i^TC_i\Delta_{S_{\omega,1}}^{\tilde{A}},\nonumber\\
      R_{12}&=Q_{12}\Delta_{\tilde{A}}+C^TB_i^T\Delta_{Q_{23}}^{\tilde{A}}-\Delta_{Q_{13}}^{\tilde{A}}B_i\tilde{C}\nonumber\\
      &\hspace{1cm}-C^TD_i^TD_i\tilde{C}\Delta_{S_{\omega,r}}^{\tilde{A}}-(\Delta_{S_{\omega,1}}^{\tilde{A}})^TC_i^TD_i\tilde{C}+C^TD_i^TC_i\Delta_{S_{\omega,2}}^{\tilde{A}},\nonumber\\
      R_{22}&=(\Delta_{\tilde{A}})^TQ_{22}+Q_{22}\Delta_{\tilde{A}}-\tilde{C}^TB_i^T\Delta_{Q_{23}}^{\tilde{A}}-\Delta_{Q_{23}}^{\tilde{A}}B_i\tilde{C}\nonumber\\
      &\hspace{1cm}+(\Delta_{S_{\omega,r}}^{\tilde{A}})^T\tilde{C}^TD_i^TD_i\tilde{C}+\tilde{C}^TD_i^TD_i\tilde{C}\Delta_{S_{\omega,r}}^{\tilde{A}}\nonumber\\
      &\hspace{1cm}-(\Delta_{S_{\omega,2}}^{\tilde{A}})^TC_i^TD_i\tilde{C}-\tilde{C}^TD_i^TC_i\Delta_{S_{\omega,2}}^{\tilde{A}}.\nonumber
      \end{align}
      By applying Lemma \ref{lemma} on equations (\ref{e40}) and (\ref{e23}), (\ref{e41}) and (\ref{e24}), and (\ref{e42}) and (\ref{e26}), we get
      \begin{align}
      trace(BB^T\Delta_{Q_{11}}^{\tilde{A}})&=trace(R_{11}\bar{P}_{11}),\nonumber\\
      trace\big(B\tilde{B}^T(\Delta_{Q_{12}}^{\tilde{A}})^T\big)&=trace(R_{12}^T\bar{P}_{12}),\nonumber\\
      trace(\tilde{B}\tilde{B}^T\Delta_{Q_{22}}^{\tilde{A}})&=trace(R_{22}\bar{P}_{22}).\nonumber
      \end{align}
      Thus
      \begin{align}
      \Delta_{J}^{\tilde{A}}&=trace\Big(2Q_{12}^T\bar{P}_{12}(\Delta_{\tilde{A}})^T+2Q_{22}\bar{P}_{22}(\Delta_{\tilde{A}})^T+2\bar{P}_{11}C^TB_i^T(\Delta_{Q_{13}}^{\tilde{A}})^T\nonumber\\
      &\hspace{1cm}-2\bar{P}_{12}\tilde{C}^TB_i^T(\Delta_{Q_{13}}^{\tilde{A}})^T+2\bar{P}_{12}^TC^TB_i^T\Delta_{Q_{23}}^{\tilde{A}}-2\bar{P}_{22}\tilde{C}^TB_i^T\Delta_{Q_{23}}^{\tilde{A}}\nonumber\\
      &\hspace{1cm}-2\tilde{C}^TD_i^TD_iC\bar{P}_{12}(\Delta_{S_{\omega,r}}^{\tilde{A}})^T+2\tilde{C}^TD_i^TD_i\tilde{C}\bar{P}_{22}(\Delta_{S_{\omega,r}}^{\tilde{A}})^T\nonumber\\
      &\hspace{1cm}+2C_i^TD_iC\bar{P}_{12}(\Delta_{S_{\omega,2}}^{\tilde{A}})^T-2C_i^TD_i\tilde{C}\bar{P}_{22}(\Delta_{S_{\omega,2}}^{\tilde{A}})^T\nonumber\\
      &\hspace{1cm}+2C_i^TD_iC\bar{P}_{11}(\Delta_{S_{\omega,1}}^{\tilde{A}})^T-2C_i^TD_i\tilde{C}\bar{P}_{12}^T(\Delta_{S_{\omega,1}}^{\tilde{A}})^T\Big).\nonumber
      \end{align}
      By differentiating equations (\ref{e18}), (\ref{e20}), (\ref{e21}), and (\ref{e8}) with respect to $\tilde{A}$, we get
      \begin{align}
      A^T\Delta_{Q_{13}}^{\tilde{A}}+\Delta_{Q_{13}}^{\tilde{A}}A_i+R_{13}&=0,\label{e43}\\
      \tilde{A}^T\Delta_{Q_{23}}^{\tilde{A}}+\Delta_{Q_{23}}^{\tilde{A}}A_i+R_{23}&=0,\label{e44}\\
      A_i^T\Delta_{Q_{33}}^{\tilde{A}}+\Delta_{Q_{33}}^{\tilde{A}}A_i+R_{33}&=0,\label{e45}\\
      A_i\Delta_{S_{\omega,1}}^{\tilde{A}}-\Delta_{S_{\omega,1}}^{\tilde{A}}A+R_1&=0,\label{e46}
      \end{align}
      wherein
      \begin{align}
      R_{13}&=Q_{13}\Delta_{\tilde{A}}+C^TB_i^T\Delta_{Q_{33}}^{\tilde{A}}+C^TD_i^TC_i\Delta_{S_{\omega,i}}^{\tilde{A}}+(\Delta_{S_{\omega,1}}^{\tilde{A}})^TC_i^TC_i,\nonumber\\
      R_{23}&=(\Delta_{\tilde{A}})^TQ_{23}+Q_{23}\Delta_{\tilde{A}}-\tilde{C}^TB_i^T\Delta_{Q_{33}}^{\tilde{A}}-(\Delta_{S_{\omega,r}}^{\tilde{A}})^T\tilde{C}^TD_i^TC_i\nonumber\\
      &\hspace*{2cm}-\tilde{C}^TD_i^TC_i\Delta_{S_{\omega,i}}^{\tilde{A}}+(\Delta_{S_{\omega,2}}^{\tilde{A}})^TC_i^TC_i,\nonumber\\
      R_{33}&=(\Delta_{\tilde{A}})^TQ_{33}+Q_{33}\Delta_{\tilde{A}}+(\Delta_{S_{\omega,i}}^{\tilde{A}})^TC_i^TC_i+C_i^TC_i\Delta_{S_{\omega,i}}^{\tilde{A}},\nonumber\\
      R_1&=\Delta_{\tilde{A}}S_{\omega,1}-\Delta_{S_{\omega,i}}^{\tilde{A}}B_iC.\nonumber
      \end{align}
      By applying Lemma \ref{lemma} on equations (\ref{e43}) and (\ref{e25}), and (\ref{e44}) and (\ref{e27}), we get
      \begin{align}
      trace(R_{13}^T\bar{P}_{13})&=trace\big(\bar{P}_{11}C^TB_i^T(\Delta_{Q_{13}}^{\tilde{A}})^T-\bar{P}_{12}\tilde{C}^TB_i^T(\Delta_{Q_{13}}^{\tilde{A}})^T\big),\nonumber\\
      trace(R_{23}^T\bar{P}_{23})&=trace\big(\bar{P}_{12}^TC^TB_i^T\Delta_{Q_{23}}^{\tilde{A}}-\bar{P}_{22}\tilde{C}^TB_i\Delta_{Q_{23}}^{\tilde{A}}\big).\nonumber
      \end{align}
      Thus
      \begin{align}
      \Delta_{J}^{\tilde{A}}&=trace\Big(2Q_{12}^T\bar{P}_{12}(\Delta_{\tilde{A}})^T+2Q_{22}\bar{P}_{22}(\Delta_{\tilde{A}})^T+2Q_{13}^T\bar{P}_{13}(\Delta_{\tilde{A}})^T\nonumber\\
      &\hspace*{1cm}+2Q_{23}\bar{P}_{23}(\Delta_{\tilde{A}})^T+2Q_{23}\bar{P}_{23}^T(\Delta_{\tilde{A}})^T+2B_iC\bar{P}_{13}\Delta_{Q_{33}}^{\tilde{A}}\nonumber\\
      &\hspace*{1cm}-2B_i\tilde{C}\bar{P}_{23}\Delta_{Q_{33}}^{\tilde{A}}-2\tilde{C}^TD_i^TD_iC\bar{P}_{12}(\Delta_{S_{\omega,r}}^{\tilde{A}})^T\nonumber\\
      &\hspace*{1cm}+2\tilde{C}^TD_i^TD_i\tilde{C}\bar{P}_{22}(\Delta_{S_{\omega,r}}^{\tilde{A}})^T-2\tilde{C}^TD_i^TC_i\bar{P}_{23}^T(\Delta_{S_{\omega,r}}^{\tilde{A}})^T\nonumber\\
      &\hspace*{1cm}+2C_i^TD_iC\bar{P}_{13}(\Delta_{S_{\omega,i}}^{\tilde{A}})^T-2C_i^TD_i\tilde{C}\bar{P}_{23}(\Delta_{S_{\omega,i}}^{\tilde{A}})^T\nonumber\\
      &\hspace*{1cm}+2C_i^TD_iC\bar{P}_{12}(\Delta_{S_{\omega,2}}^{\tilde{A}})^T-2C_i^TD_i\tilde{C}\bar{P}_{22}(\Delta_{S_{\omega,2}}^{\tilde{A}})^T\nonumber\\
      &\hspace*{1cm}+2C_i^TC_i\bar{P}_{23}^T(\Delta_{S_{\omega,2}}^{\tilde{A}})^T+2C_i^TD_iC\bar{P}_{11}(\Delta_{S_{\omega,1}}^{\tilde{A}})^T\nonumber\\
      &\hspace*{1cm}-2C_i^TD_i\tilde{C}\bar{P}_{12}^T(\Delta_{S_{\omega,1}}^{\tilde{A}})^T+2C_i^TC_i\bar{P}_{13}^T(\Delta_{S_{\omega,1}}^{\tilde{A}})^T\Big).\nonumber
      \end{align}
      By applying Lemma \ref{lemma} on equations (\ref{e45}) and (\ref{e28}), we get
      \begin{align}
      trace(R_{33}\bar{P}_{33})=2trace(B_iC\bar{P}_{13}\Delta_{Q_{33}}^{\tilde{A}}-B_i\tilde{C}\bar{P}_{23}\Delta_{Q_{33}}^{\tilde{A}}).\nonumber
      \end{align}
      Thus
      \begin{align}
      \Delta_{J}^{\tilde{A}}&=trace\Big(2Q_{12}^T\bar{P}_{12}(\Delta_{\tilde{A}})^T+2Q_{22}\bar{P}_{22}(\Delta_{\tilde{A}})^T+2Q_{13}^T\bar{P}_{13}(\Delta_{\tilde{A}})^T\nonumber\\
      &\hspace*{0.5cm}+2Q_{23}\bar{P}_{23}(\Delta_{\tilde{A}})^T+2Q_{23}\bar{P}_{23}^T(\Delta_{\tilde{A}})^T+2Q_{33}\bar{P}_{33}(\Delta_{\tilde{A}})^T\nonumber\\
      &\hspace*{0.5cm}-2\tilde{C}^TD_i^TD_iC\bar{P}_{12}(\Delta_{S_{\omega,r}}^{\tilde{A}})^T+2\tilde{C}^TD_i^TD_i\tilde{C}\bar{P}_{22}(\Delta_{S_{\omega,r}}^{\tilde{A}})^T\nonumber\\
      &\hspace*{0.5cm}-2\tilde{C}^TD_i^TC_i\bar{P}_{23}^T(\Delta_{S_{\omega,r}}^{\tilde{A}})^T+2C_i^TD_iC\bar{P}_{13}(\Delta_{S_{\omega,i}}^{\tilde{A}})^T\nonumber\\
      &\hspace*{0.5cm}-2C_i^TD_i\tilde{C}\bar{P}_{23}(\Delta_{S_{\omega,i}}^{\tilde{A}})^T+2C_i^TC_i\bar{P}_{33}(\Delta_{S_{\omega,i}}^{\tilde{A}})^T\nonumber\\
      &\hspace*{0.5cm}+2C_i^TD_iC\bar{P}_{12}(\Delta_{S_{\omega,2}}^{\tilde{A}})^T-2C_i^TD_i\tilde{C}\bar{P}_{22}(\Delta_{S_{\omega,2}}^{\tilde{A}})^T\nonumber\\
      &\hspace*{0.5cm}+2C_i^TC_i\bar{P}_{23}^T(\Delta_{S_{\omega,2}}^{\tilde{A}})^T+2C_i^TD_iC\bar{P}_{11}(\Delta_{S_{\omega,1}}^{\tilde{A}})^T\nonumber\\
      &\hspace*{0.5cm}-2C_i^TD_i\tilde{C}\bar{P}_{12}^T(\Delta_{S_{\omega,1}}^{\tilde{A}})^T+2C_i^TC_i\bar{P}_{13}^T(\Delta_{S_{\omega,1}}^{\tilde{A}})^T\Big).\nonumber
      \end{align}
      Since $S_{\omega,2}=S_\omega[\tilde{A}]-S_\omega[A_i]$, $\Delta_{S_{\omega,2}}^{\tilde{A}}=\Delta_{S_{\omega,r}}^{\tilde{A}}-\Delta_{S_{\omega,i}}^{\tilde{A}}$. Therefore,
      \begin{align}
      \Delta_{J}^{\tilde{A}}&=trace\Big(2Q_{12}^T\bar{P}_{12}(\Delta_{\tilde{A}})^T+2Q_{22}\bar{P}_{22}(\Delta_{\tilde{A}})^T+2Q_{13}^T\bar{P}_{13}(\Delta_{\tilde{A}})^T\nonumber\\
      &\hspace*{0.5cm}+2Q_{23}\bar{P}_{23}(\Delta_{\tilde{A}})^T+2Q_{23}\bar{P}_{23}^T(\Delta_{\tilde{A}})^T+2Q_{33}\bar{P}_{33}(\Delta_{\tilde{A}})^T\nonumber\\
      &\hspace*{0.5cm}+2W_1\Delta_{S_{\omega,i}}^{\tilde{A}}+2C_i^TD_iC\bar{P}_{11}(\Delta_{S_{\omega,1}}^{\tilde{A}})^T-2C_i^TD_i\tilde{C}\bar{P}_{12}^T(\Delta_{S_{\omega,1}}^{\tilde{A}})^T\nonumber\\
      &\hspace*{0.5cm}+2C_i^TC_i\bar{P}_{13}^T(\Delta_{S_{\omega,1}}^{\tilde{A}})^T\Big).\nonumber
      \end{align}
      Note that
      \begin{align}
      \Delta_{S_{\omega,i}}^{\tilde{A}}&=Re\Big[\frac{j}{2\pi}L\big(-A_i-j\omega I,-\Delta_{\tilde{A}}\big)\Big],\nonumber
      \end{align}cf. \cite{petersson2013nonlinear}. Owing to the interchangeability of trace and integral,
      \begin{align}
      trace\big(W_1\Delta_{S_{\omega,i}}^{\tilde{A}}\big)&=-trace(V_1\Delta_{\tilde{A}}).\nonumber
      \end{align}Therefore,
      \begin{align}
      \Delta_{J}^{\tilde{A}}&=trace\Big(2Q_{12}^T\bar{P}_{12}(\Delta_{\tilde{A}})^T+2Q_{22}\bar{P}_{22}(\Delta_{\tilde{A}})^T+2Q_{13}^T\bar{P}_{13}(\Delta_{\tilde{A}})^T\nonumber\\
      &\hspace*{1cm}+2Q_{23}\bar{P}_{23}(\Delta_{\tilde{A}})^T+2Q_{23}\bar{P}_{23}^T(\Delta_{\tilde{A}})^T+2Q_{33}\bar{P}_{33}(\Delta_{\tilde{A}})^T\nonumber\\
      &\hspace*{1cm}-2V_1^T(\Delta_{\tilde{A}})^T+2C_i^TD_iC\bar{P}_{11}(\Delta_{S_{\omega,1}}^{\tilde{A}})^T-2C_i^TD_i\tilde{C}\bar{P}_{12}^T(\Delta_{S_{\omega,1}}^{\tilde{A}})^T\nonumber\\
      &\hspace*{1cm}+2C_i^TC_i\bar{P}_{13}^T(\Delta_{S_{\omega,1}}^{\tilde{A}})^T\Big).\nonumber
      \end{align}
      By applying Lemma \ref{lemma} on equations (\ref{e46}) and (\ref{e22}), we get
      \begin{align}
      trace(R_1^TE_1)&=trace\big(E_1S_{\omega,1}^T(\Delta_{\tilde{A}})^T-E_1C^TB_i^T(\Delta_{S_{\omega,i}}^{\tilde{A}})^T\big).\nonumber
      \end{align}
      Thus
      \begin{align}
      \Delta_{J}^{\tilde{A}}&=trace\Big(2Q_{12}^T\bar{P}_{12}(\Delta_{\tilde{A}})^T+2Q_{22}\bar{P}_{22}(\Delta_{\tilde{A}})^T+2Q_{13}^T\bar{P}_{13}(\Delta_{\tilde{A}})^T\nonumber\\
      &\hspace*{1cm}+2Q_{23}\bar{P}_{23}(\Delta_{\tilde{A}})^T+2Q_{23}\bar{P}_{23}^T(\Delta_{\tilde{A}})^T+2Q_{33}\bar{P}_{33}(\Delta_{\tilde{A}})^T\nonumber\\
      &\hspace*{1cm}-2V_1^T(\Delta_{\tilde{A}})^T+2E_1S_{\omega,1}^T(\Delta_{\tilde{A}})^T-2E_1C^TB_i^T(\Delta_{S_{\omega,i}}^{\tilde{A}})^T\Big).\nonumber
      \end{align}
      Note that $-trace(V_2\Delta_{\tilde{A}})=trace(-B_iCE_1^T\Delta_{S_{\omega,i}}^{\tilde{A}})$ due to the interchangeability of trace and integral. Thus
      \begin{align}
      \Delta_{J}^{\tilde{A}}&=trace\Big(2Q_{12}^T\bar{P}_{12}(\Delta_{\tilde{A}})^T+2Q_{22}\bar{P}_{22}(\Delta_{\tilde{A}})^T+2Q_{13}^T\bar{P}_{13}(\Delta_{\tilde{A}})^T\nonumber\\
      &\hspace*{1cm}+2Q_{23}\bar{P}_{23}(\Delta_{\tilde{A}})^T+2Q_{23}\bar{P}_{23}^T(\Delta_{\tilde{A}})^T+2Q_{33}\bar{P}_{33}(\Delta_{\tilde{A}})^T\nonumber\\
      &\hspace*{1cm}+2E_1S_{\omega,1}^T(\Delta_{\tilde{A}})^T-2V_1^T(\Delta_{\tilde{A}})^T-2V_2^T(\Delta_{\tilde{A}})^T\Big).\nonumber
      \end{align}
      Hence,
      \begin{align}
      \frac{\partial}{\partial\tilde{A}}||\Delta_r(s)||_{\mathcal{H}_{2,\omega}}^2=2(Q_{12}^T\bar{P}_{12}+Q_{22}\bar{P}_{22}+d_1),\nonumber
      \end{align} and
      \begin{align}
      Q_{12}^T\bar{P}_{12}+Q_{22}\bar{P}_{22}+d_1=0\nonumber
      \end{align} is a necessary condition for the local optimum of $||\Delta_r(s)||_{\mathcal{H}_{2,\omega}}^2$.

 Let us denote the first-order derivatives of $Q_{11}$, $Q_{12}$, $Q_{13}$, $Q_{22}$, $Q_{23}$, $Q_{33}$, $S_{\omega,1}$, $S_{\omega,2}$, $S_{\omega}[\tilde{A}]$, $S_{\omega}[A_i]$, and $J$ with respect to $\tilde{B}$ as $\Delta_{Q_{11}}^{\tilde{B}}$, $\Delta_{Q_{12}}^{\tilde{B}}$, $\Delta_{Q_{13}}^{\tilde{B}}$, $\Delta_{Q_{22}}^{\tilde{B}}$, $\Delta_{Q_{23}}^{\tilde{B}}$, $\Delta_{Q_{33}}^{\tilde{B}}$, $\Delta_{S_{\omega,1}}^{\tilde{B}}$, $\Delta_{S_{\omega,2}}^{\tilde{B}}$, $\Delta_{S_{\omega,r}}^{\tilde{B}}$, $\Delta_{S_{\omega,i}}^{\tilde{B}}$, and $\Delta_{J}^{\tilde{B}}$, respectively. Further, let us denote the differential of $\tilde{B}$ as $\Delta_{\tilde{B}}$. By differentiating $J$ with respect to $\tilde{B}$, we get
      \begin{align}
      \Delta_{J}^{\tilde{B}}&=trace(B^T\Delta_{Q_{11}}^{\tilde{B}}B+2B^TQ_{12}\Delta_{\tilde{B}}+2B^T\Delta_{Q_{12}}^{\tilde{B}}\tilde{B}+2\tilde{B}^TQ_{22}\Delta_{\tilde{B}}\nonumber\\
      &\hspace*{3cm}+\tilde{B}^T\Delta_{Q_{22}}^{\tilde{B}}\tilde{B})\nonumber\\
      &=trace(2Q_{12}^TB(\Delta_{\tilde{B}})^T+2Q_{22}\tilde{B}(\Delta_{\tilde{B}})^T+BB^T\Delta_{Q_{11}}^{\tilde{B}}+2B\tilde{B}^T(\Delta_{Q_{12}}^{\tilde{B}})^T\nonumber\\
      &\hspace*{3cm}+\tilde{B}\tilde{B}^T\Delta_{Q_{22}}^{\tilde{B}}).\nonumber
      \end{align}
      By differentiating equations (\ref{e16}), (\ref{e17}), and (\ref{e19}) with respect to $\tilde{B}$, we get
      \begin{align}
      A^T\Delta_{Q_{11}}^{\tilde{B}}+\Delta_{Q_{11}}^{\tilde{B}}A+S_{11}&=0,\label{e47}\\
      A^T\Delta_{Q_{12}}^{\tilde{B}}+\Delta_{Q_{12}}^{\tilde{B}}\tilde{A}+S_{12}&=0,\label{e48}\\
      \tilde{A}^T\Delta_{Q_{22}}^{\tilde{B}}+\Delta_{Q_{22}}^{\tilde{B}}\tilde{A}+S_{22}&=0,\label{e49}
      \end{align}
      wherein
      \begin{align}
      S_{11}&=-C^TD^{-T}(\Delta_{\tilde{B}})^TQ_{13}^T-Q_{13}\Delta_{\tilde{B}}D^{-1}C+C^TB_i^T(\Delta_{Q_{13}}^{\tilde{B}})^T+\Delta_{Q_{13}}^{\tilde{B}}B_iC\nonumber\\
      &\hspace*{1cm}+(\Delta_{S_{\omega,1}}^{\tilde{B}})^TC_i^TD_iC+C^TD_i^TC_i\Delta_{S_{\omega,1}}^{\tilde{B}},\nonumber\\
      S_{12}&=-C^TD^{-T}(\Delta_{\tilde{B}})^TQ_{23}+Q_{13}\Delta_{\tilde{B}}D^{-1}\tilde{C}+C^TB_i^T\Delta_{Q_{23}}^{\tilde{B}}-\Delta_{Q_{13}}^{\tilde{B}}B_i\tilde{C}\nonumber\\
      &\hspace*{1cm}-(\Delta_{S_{\omega,1}}^{\tilde{B}})^TC_i^TD_i\tilde{C}+C^TD_i^TC_i\Delta_{S_{\omega,2}}^{\tilde{B}},\nonumber\\
      S_{22}&=\tilde{C}^TD^{-T}(\Delta_{\tilde{B}})^TQ_{23}+Q_{23}\Delta_{\tilde{B}}D^{-1}\tilde{C}-\tilde{C}^TB_i^T\Delta_{Q_{23}}^{\tilde{B}}-\Delta_{Q_{23}}^{\tilde{B}}B_i\tilde{C}\nonumber\\
      &\hspace*{1cm}-(\Delta_{S_{\omega,2}}^{\tilde{B}})^TC_i^TD_i\tilde{C}-\tilde{C}^TD_i^TC_i\Delta_{S_{\omega,2}}^{\tilde{B}}.\nonumber
      \end{align}
      By applying Lemma \ref{lemma} on equations (\ref{e47}) and (\ref{e23}), (\ref{e48}) and (\ref{e24}), and (\ref{e49}) and (\ref{e26}), we get
      \begin{align}
      trace(BB^T\Delta_{Q_{11}}^{\tilde{B}})&=trace(S_{11}\bar{P}_{11}),\nonumber\\
      trace(B\tilde{B}^T(\Delta_{Q_{12}}^{\tilde{B}})^T)&=trace(S_{12}^T\bar{P}_{12}),\nonumber\\
      trace(\tilde{B}\tilde{B}^T\Delta_{Q_{22}}^{\tilde{B}})&=trace(S_{22}\bar{P}_{22}).\nonumber
      \end{align}
Thus
\begin{align}
      \Delta_{J}^{\tilde{B}}&=trace\Big(2Q_{12}^TB(\Delta_{\tilde{B}})^T+2Q_{22}\tilde{B}(\Delta_{\tilde{B}})^T-2Q_{13}^T\bar{P}_{11}C^TD^{-T}(\Delta_{\tilde{B}})^T\nonumber\\
      &\hspace*{1cm}+2Q_{13}^T\bar{P}_{12}\tilde{C}^TD^{-T}(\Delta_{\tilde{B}})^T-2Q_{23}\bar{P}_{12}^TC^TD^{-T}(\Delta_{\tilde{B}})^T\nonumber\\
      &\hspace*{1cm}+2Q_{23}\bar{P}_{22}\tilde{C}^TD^{-T}(\Delta_{\tilde{B}})^T+2\bar{P}_{11}C^TB_i^T(\Delta_{Q_{13}}^{\tilde{B}})^T\nonumber\\
      &\hspace*{1cm}-2\bar{P}_{12}\tilde{C}^TB_i^T(\Delta_{Q_{13}}^{\tilde{B}})^T+2\bar{P}_{12}^TC^TB_i^T\Delta_{Q_{23}}^{\tilde{B}}-2\bar{P}_{22}\tilde{C}^TB_i^T\Delta_{Q_{23}}^{\tilde{B}}\nonumber\\
      &\hspace*{1cm}+2\bar{P}_{12}^TC^TD_i^TC_i\Delta_{S_{\omega,2}}^{\tilde{B}}-2\bar{P}_{22}\tilde{C}^TD_i^TC_i\Delta_{S_{\omega,2}}^{\tilde{B}}\nonumber\\
      &\hspace*{1cm}+2\bar{P}_{11}C^TD_i^TC_i\Delta_{S_{\omega,1}}^{\tilde{B}}-2\bar{P}_{12}\tilde{C}^TD_i^TC_i\Delta_{S_{\omega,1}}^{\tilde{B}}\Big).\nonumber
      \end{align}
      By differentiating equations (\ref{e18}), (\ref{e20}), and (\ref{e21}) with respect to $\tilde{B}$, we get
      \begin{align}
      A^T\Delta_{Q_{13}}^{\tilde{B}}+\Delta_{Q_{13}}^{\tilde{B}}A_i+S_{13}&=0,\label{e50}\\
      \tilde{A}^T\Delta_{Q_{23}}^{\tilde{B}}+\Delta_{Q_{23}}^{\tilde{B}}A_i+S_{23}&=0,\label{e51}\\
      A_i^T\Delta_{Q_{33}}^{\tilde{B}}+\Delta_{Q_{33}}^{\tilde{B}}A_i+S_{33}&=0,\label{e52}
      \end{align}
      wherein
      \begin{align}
      S_{13}&=-Q_{13}\Delta_{\tilde{B}}D^{-1}\tilde{C}-C^TD^{-T}(\Delta_{\tilde{B}})^TQ_{33}+C^TB_i^T\Delta_{Q_{33}}^{\tilde{B}}+C^TD_i^TC_i\Delta_{S_{\omega,i}}^{\tilde{B}}\nonumber\\
      &\hspace*{1cm}+(\Delta_{S_{\omega,1}}^{\tilde{B}})^TC_i^TC_i,\nonumber\\
      S_{23}&=-Q_{23}\Delta_{\tilde{B}}D^{-1}\tilde{C}+\tilde{C}^TD^{-T}(\Delta_{\tilde{B}})^TQ_{33}-\tilde{C}^TB_i^T\Delta_{Q_{33}}^{\tilde{B}}-\tilde{C}^TD_i^TC_i\Delta_{S_{\omega,i}}^{\tilde{B}}\nonumber\\
      &\hspace*{1cm}+(\Delta_{S_{\omega,2}}^{\tilde{B}})^TC_i^TC_i,\nonumber\\
      S_{33}&=-\tilde{C}^TD^{-T}(\Delta_{\tilde{B}})^TQ_{33}-Q_{33}\Delta_{\tilde{B}}D^{-1}\tilde{C}+(\Delta_{S_{\omega,i}}^{\tilde{B}})^TC_i^TC_i+C_i^TC_i\Delta_{S_{\omega,i}}^{\tilde{B}}.\nonumber
      \end{align}
      By applying Lemma \ref{lemma} on equations (\ref{e50}) and (\ref{e25}), and (\ref{e51}) and (\ref{e27}), we get
      \begin{align}
      trace(\bar{P}_{11}C^TB_i^T(\Delta_{Q_{13}}^{\tilde{B}})^T-\bar{P}_{12}\tilde{C}^TB_i^T(\Delta_{Q_{13}}^{\tilde{B}})^T)&=trace(S_{13}^T\bar{P}_{13}),\nonumber\\
      trace(\bar{P}_{12}^TC^TB_i^T\Delta_{Q_{23}}^{\tilde{B}}-\bar{P}_{22}\tilde{C}^TB_i^T\Delta_{Q_{23}}^{\tilde{B}})&=trace(S_{23}^T\bar{P}_{23}).\nonumber
      \end{align}
      Thus
      \begin{align}
      \Delta_{J}^{\tilde{B}}&=trace\Big(2Q_{12}^TB(\Delta_{\tilde{B}})^T+2Q_{22}\tilde{B}(\Delta_{\tilde{B}})^T-2Q_{13}^T\bar{P}_{11}C^TD^{-T}(\Delta_{\tilde{B}})^T\nonumber\\
      &\hspace*{0.5cm}+2Q_{13}^T\bar{P}_{12}\tilde{C}^TD^{-T}(\Delta_{\tilde{B}})^T-2Q_{13}^T\bar{P}_{13}\tilde{C}^TD^{-T}(\Delta_{\tilde{B}})^T\nonumber\\
      &\hspace*{0.5cm}-2Q_{23}\bar{P}_{12}^TC^TD^{-T}(\Delta_{\tilde{B}})^T-2Q_{23}\bar{P}_{23}\tilde{C}^TD^{-T}(\Delta_{\tilde{B}})^T\nonumber\\
      &\hspace*{0.5cm}+2Q_{23}\bar{P}_{22}\tilde{C}^TD^{-T}(\Delta_{\tilde{B}})^T-2Q_{33}\bar{P}_{13}^TC^TD^{-T}(\Delta_{\tilde{B}})^T\nonumber\\
      &\hspace*{0.5cm}+2Q_{33}\bar{P}_{23}^T\tilde{C}^TD^{-T}(\Delta_{\tilde{B}})^T+2B_iC\bar{P}_{13}\Delta_{Q_{33}}^{\tilde{B}}-2B_i\tilde{C}\bar{P}_{23}\Delta_{Q_{33}}^{\tilde{B}}\nonumber\\
      &\hspace*{0.5cm}+2\bar{P}_{13}^TC^TD_i^TC_i\Delta_{S_{\omega,i}}^{\tilde{B}}-2\bar{P}_{23}^T\tilde{C}^TD_i^TC_i\Delta_{S_{\omega,i}}^{\tilde{B}}+2\bar{P}_{12}^TC^TD_i^TC_i\Delta_{S_{\omega,2}}^{\tilde{B}}\nonumber\\
      &\hspace*{0.5cm}-2\bar{P}_{22}\tilde{C}^TD_i^TC_i\Delta_{S_{\omega,2}}^{\tilde{B}}+2\bar{P}_{23}C_i^TC_i\Delta_{S_{\omega,2}}^{\tilde{B}}+2\bar{P}_{11}C^TD_i^TC_i\Delta_{S_{\omega,1}}^{\tilde{B}}\nonumber\\
      &\hspace*{0.5cm}-2\bar{P}_{12}\tilde{C}^TD_i^TC_i\Delta_{S_{\omega,1}}^{\tilde{B}}+2\bar{P}_{13}C_i^TC_i\Delta_{S_{\omega,1}}^{\tilde{B}}\Big).\nonumber
      \end{align}
      By applying Lemma \ref{lemma} on equations (\ref{e52}) and (\ref{e28}), we get
      \begin{align}
      trace(S_{33}\bar{P}_{33})=trace\big((2B_iC\bar{P}_{13}-2B_i\tilde{C}\bar{P}_{23})\Delta_{Q_{33}}^{\tilde{B}}\big).\nonumber
      \end{align}
      Thus
      \begin{align}
      \Delta_{J}^{\tilde{B}}&=trace\Big(2Q_{12}^TB(\Delta_{\tilde{B}})^T+2Q_{22}\tilde{B}(\Delta_{\tilde{B}})^T-2Q_{13}^T\bar{P}_{11}C^TD^{-T}(\Delta_{\tilde{B}})^T\nonumber\\
      &\hspace*{0.5cm}+2Q_{13}^T\bar{P}_{12}\tilde{C}^TD^{-T}(\Delta_{\tilde{B}})^T-2Q_{13}^T\bar{P}_{13}\tilde{C}^TD^{-T}(\Delta_{\tilde{B}})^T\nonumber\\
      &\hspace*{0.5cm}-2Q_{23}\bar{P}_{12}^TC^TD^{-T}(\Delta_{\tilde{B}})^T-2Q_{23}\bar{P}_{23}\tilde{C}^TD^{-T}(\Delta_{\tilde{B}})^T\nonumber\\
      &\hspace*{0.5cm}+2Q_{23}\bar{P}_{22}\tilde{C}^TD^{-T}(\Delta_{\tilde{B}})^T-2Q_{33}\bar{P}_{13}^TC^TD^{-T}(\Delta_{\tilde{B}})^T\nonumber\\
      &\hspace*{0.5cm}+2Q_{33}\bar{P}_{23}^T\tilde{C}^TD^{-T}(\Delta_{\tilde{B}})^T-2Q_{33}\bar{P}_{33}\tilde{C}^TD^{-T}(\Delta_{\tilde{B}})^T\nonumber\\
      &\hspace*{0.5cm}+2\bar{P}_{13}^TC^TD_i^TC_i\Delta_{S_{\omega,i}}^{\tilde{B}}-2\bar{P}_{23}^T\tilde{C}^TD_i^TC_i\Delta_{S_{\omega,i}}^{\tilde{B}}+2\bar{P}_{33}C_i^TC_i\Delta_{S_{\omega,i}}^{\tilde{B}}\nonumber\\
      &\hspace*{0.5cm}+2\bar{P}_{12}^TC^TD_i^TC_i\Delta_{S_{\omega,2}}^{\tilde{B}}-2\bar{P}_{22}\tilde{C}^TD_i^TC_i\Delta_{S_{\omega,2}}^{\tilde{B}}+2\bar{P}_{23}C_i^TC_i\Delta_{S_{\omega,2}}^{\tilde{B}}\nonumber\\
      &\hspace*{0.5cm}+2\bar{P}_{11}C^TD_i^TC_i\Delta_{S_{\omega,1}}^{\tilde{B}}-2\bar{P}_{12}\tilde{C}^TD_i^TC_i\Delta_{S_{\omega,1}}^{\tilde{B}}+2\bar{P}_{13}C_i^TC_i\Delta_{S_{\omega,1}}^{\tilde{B}}\Big).\nonumber
      \end{align}
      Since $S_{\omega,2}=S_\omega[\tilde{A}]-S_\omega[A_i]$, $\Delta_{S_{\omega,2}}^{\tilde{B}}=-\Delta_{S_{\omega,i}}^{\tilde{B}}$. Therefore,
      \begin{align}
      \Delta_{J}^{\tilde{B}}&=trace\Big(2Q_{12}^TB(\Delta_{\tilde{B}})^T+2Q_{22}\tilde{B}(\Delta_{\tilde{B}})^T-2Q_{13}^T\bar{P}_{11}C^TD^{-T}(\Delta_{\tilde{B}})^T\nonumber\\
      &\hspace*{0.5cm}+2Q_{13}^T\bar{P}_{12}\tilde{C}^TD^{-T}(\Delta_{\tilde{B}})^T-2Q_{13}^T\bar{P}_{13}\tilde{C}^TD^{-T}(\Delta_{\tilde{B}})^T\nonumber\\
      &\hspace*{0.5cm}-2Q_{23}\bar{P}_{12}^TC^TD^{-T}(\Delta_{\tilde{B}})^T-2Q_{23}\bar{P}_{23}\tilde{C}^TD^{-T}(\Delta_{\tilde{B}})^T\nonumber\\
      &\hspace*{0.5cm}+2Q_{23}\bar{P}_{22}\tilde{C}^TD^{-T}(\Delta_{\tilde{B}})^T-2Q_{33}\bar{P}_{13}^TC^TD^{-T}(\Delta_{\tilde{B}})^T\nonumber\\
      &\hspace*{0.5cm}+2Q_{33}\bar{P}_{23}^T\tilde{C}^TD^{-T}(\Delta_{\tilde{B}})^T-2Q_{33}\bar{P}_{33}\tilde{C}^TD^{-T}(\Delta_{\tilde{B}})^T+2W_1\Delta_{S_{\omega,i}}^{\tilde{B}}\nonumber\\
      &\hspace*{0.5cm}+2\bar{P}_{11}C^TD_i^TC_i\Delta_{S_{\omega,1}}^{\tilde{B}}-2\bar{P}_{12}\tilde{C}^TD_i^TC_i\Delta_{S_{\omega,1}}^{\tilde{B}}+2\bar{P}_{13}C_i^TC_i\Delta_{S_{\omega,1}}^{\tilde{B}}\Big).\nonumber
      \end{align}
Note that
      \begin{align}
      \Delta_{S_{\omega,i}}^{\tilde{B}}&=Re\Big[\frac{j}{2\pi}L\big(-A_i-j\omega I,\Delta_{\tilde{B}}D^{-1}\tilde{C}\big)\Big],\nonumber
      \end{align}cf. \cite{petersson2013nonlinear}. Owing to the interchangeability of trace and integral,
      \begin{align}
      trace\big(W_1\Delta_{S_{\omega,i}}^{\tilde{B}}\big)&=trace(V_1\Delta_{\tilde{B}}D^{-1}\tilde{C}).\nonumber
      \end{align}
      Thus
\begin{align}
      \Delta_{J}^{\tilde{B}}&=trace\Big(2Q_{12}^TB(\Delta_{\tilde{B}})^T+2Q_{22}\tilde{B}(\Delta_{\tilde{B}})^T-2Q_{13}^T\bar{P}_{11}C^TD^{-T}(\Delta_{\tilde{B}})^T\nonumber\\
      &\hspace*{0.5cm}+2Q_{13}^T\bar{P}_{12}\tilde{C}^TD^{-T}(\Delta_{\tilde{B}})^T-2Q_{13}^T\bar{P}_{13}\tilde{C}^TD^{-T}(\Delta_{\tilde{B}})^T\nonumber\\
      &\hspace*{0.5cm}-2Q_{23}\bar{P}_{12}^TC^TD^{-T}(\Delta_{\tilde{B}})^T-2Q_{23}\bar{P}_{23}\tilde{C}^TD^{-T}(\Delta_{\tilde{B}})^T\nonumber\\
      &\hspace*{0.5cm}+2Q_{23}\bar{P}_{22}\tilde{C}^TD^{-T}(\Delta_{\tilde{B}})^T-2Q_{33}\bar{P}_{13}^TC^TD^{-T}(\Delta_{\tilde{B}})^T\nonumber\\
      &\hspace*{0.5cm}+2Q_{33}\bar{P}_{23}^T\tilde{C}^TD^{-T}(\Delta_{\tilde{B}})^T-2Q_{33}\bar{P}_{33}\tilde{C}^TD^{-T}(\Delta_{\tilde{B}})^T\nonumber\\
      &\hspace*{0.5cm}+2V_1^T\tilde{C}^TD^{-T}(\Delta_{\tilde{B}})^T+2C_i^TD_iC\bar{P}_{11}(\Delta_{S_{\omega,1}}^{\tilde{B}})^T\nonumber\\
      &\hspace*{0.5cm}-2C_i^TD_i\tilde{C}\bar{P}_{12}^T(\Delta_{S_{\omega,1}}^{\tilde{B}})^T+2C_i^TC_i\bar{P}_{13}^T(\Delta_{S_{\omega,1}}^{\tilde{B}})^T\Big).\nonumber
      \end{align}
By differentiating the equation (\ref{e8}) with respect to $\tilde{B}$, we get
      \begin{align}
      A_i\Delta_{S_{\omega,1}}^{\tilde{B}}-\Delta_{S_{\omega,1}}^{\tilde{B}}A+S_2=0\label{e53}
      \end{align}where
      \begin{align}
      S_2=-\Delta_{\tilde{B}}D^{-1}\tilde{C}S_{\omega,1}-\Delta_{\tilde{B}}D^{-1}CS_{\omega}[A]-\Delta_{S_{\omega,i}}^{\tilde{B}}B_iC+S_\omega[A_i]\Delta_{\tilde{B}}D^{-1}C.\nonumber
      \end{align}
       By applying Lemma \ref{lemma} on equations (\ref{e53}) and (\ref{e22}), we get
      \begin{align}
      trace(S_2^TE_1)&=trace\big(C_i^TD_iC\bar{P}_{11}(\Delta_{S_{\omega,1}}^{\tilde{B}})^T-C_i^TD_i\tilde{C}\bar{P}_{12}^T(\Delta_{S_{\omega,1}}^{\tilde{B}})^T\nonumber\\
      &\hspace*{1cm}+C_i^TC_i\bar{P}_{13}^T(\Delta_{S_{\omega,1}}^{\tilde{B}})^T\big).\nonumber
      \end{align}
      Thus
\begin{align}
      \Delta_{J}^{\tilde{B}}&=trace\Big(2Q_{12}^TB(\Delta_{\tilde{B}})^T+2Q_{22}\tilde{B}(\Delta_{\tilde{B}})^T-2Q_{13}^T\bar{P}_{11}C^TD^{-T}(\Delta_{\tilde{B}})^T\nonumber\\
      &\hspace*{0.5cm}+2Q_{13}^T\bar{P}_{12}\tilde{C}^TD^{-T}(\Delta_{\tilde{B}})^T-2Q_{13}^T\bar{P}_{13}\tilde{C}^TD^{-T}(\Delta_{\tilde{B}})^T\nonumber\\
      &\hspace*{0.5cm}-2Q_{23}\bar{P}_{12}^TC^TD^{-T}(\Delta_{\tilde{B}})^T-2Q_{23}\bar{P}_{23}\tilde{C}^TD^{-T}(\Delta_{\tilde{B}})^T\nonumber\\
      &\hspace*{0.5cm}+2Q_{23}\bar{P}_{22}\tilde{C}^TD^{-T}(\Delta_{\tilde{B}})^T-2Q_{33}\bar{P}_{13}^TC^TD^{-T}(\Delta_{\tilde{B}})^T\nonumber\\
      &\hspace*{0.5cm}+2Q_{33}\bar{P}_{23}^T\tilde{C}^TD^{-T}(\Delta_{\tilde{B}})^T-2Q_{33}\bar{P}_{33}\tilde{C}^TD^{-T}(\Delta_{\tilde{B}})^T\nonumber\\
      &\hspace*{0.5cm}+2V_1^T\tilde{C}^TD^{-T}(\Delta_{\tilde{B}})^T-2E_1S_{\omega,1}^T\tilde{C}^TD^{-T}(\Delta_{\tilde{B}})^T\nonumber\\
      &\hspace*{0.5cm}-2E_1S_\omega[A^T]C^TD^{-T}(\Delta_{\tilde{B}})^T+2S_\omega[A_i^T]E_1C^TD^{-T}(\Delta_{\tilde{B}})^T\nonumber\\
      &\hspace*{0.5cm}+2B_iCE_1^T\Delta_{S_{\omega,i}}^{\tilde{B}}\Big).\nonumber
      \end{align}
Note that $trace(-V_2\Delta_{\tilde{B}}D^{-1}\tilde{C})=trace(B_iCE_1^T\Delta_{S_{\omega,i}}^{\tilde{B}})$ due to the interchangeability of trace and integral. Thus
\begin{align}
      \Delta_{J}^{\tilde{B}}&=trace\Big(2Q_{12}^TB(\Delta_{\tilde{B}})^T+2Q_{22}\tilde{B}(\Delta_{\tilde{B}})^T-2Q_{13}^T\bar{P}_{11}C^TD_i^T(\Delta_{\tilde{B}})^T\nonumber\\
      &\hspace*{0.5cm}+2Q_{13}^T\bar{P}_{12}C_i^T(\Delta_{\tilde{B}})^T-2Q_{13}^T\bar{P}_{13}C_i^T(\Delta_{\tilde{B}})^T\nonumber\\
      &\hspace*{0.5cm}-2Q_{23}\bar{P}_{12}^TC^TD_i^T(\Delta_{\tilde{B}})^T-2Q_{23}\bar{P}_{23}C_i^T(\Delta_{\tilde{B}})^T\nonumber\\
      &\hspace*{0.5cm}+2Q_{23}\bar{P}_{22}C_i^T(\Delta_{\tilde{B}})^T-2Q_{33}\bar{P}_{13}^TC^TD_i^T(\Delta_{\tilde{B}})^T\nonumber\\
      &\hspace*{0.5cm}+2Q_{33}\bar{P}_{23}^TC_i^T(\Delta_{\tilde{B}})^T-2Q_{33}\bar{P}_{33}C_i^T(\Delta_{\tilde{B}})^T\nonumber\\
      &\hspace*{0.5cm}+2V_1^TC_i^T(\Delta_{\tilde{B}})^T-2V_2^TC_i^T(\Delta_{\tilde{B}})^T-2E_1S_{\omega,1}^TC_i^T(\Delta_{\tilde{B}})^T\nonumber\\
      &\hspace*{0.5cm}-2E_1S_\omega[A^T]C^TD_i^T(\Delta_{\tilde{B}})^T+2S_\omega[A_i^T]E_1C^TD_i^T(\Delta_{\tilde{B}})^T\Big).\nonumber
      \end{align}
      Hence,
      \begin{align}
      \frac{\partial}{\partial\tilde{B}}||\Delta_r(s)||_{\mathcal{H}_{2,\omega}}^2=2(Q_{12}^TB+Q_{22}\tilde{B}+d_2),\nonumber
      \end{align} and
      \begin{align}
      Q_{12}^TB+Q_{22}\tilde{B}+d_2=0\nonumber
      \end{align} is a necessary condition for the local optimum of $||\Delta_r(s)||_{\mathcal{H}_{2,\omega}}^2$.

Let us denote the first-order derivatives of $P_{13}$, $P_{23}$, $P_{33}$, $S_{\omega,1}$, $S_{\omega,2}$, $S_{\omega}[\tilde{A}]$, $S_{\omega}[A_i]$ and $J$ with respect to $\tilde{C}$ as $\Delta_{P_{13}}^{\tilde{C}}$, $\Delta_{P_{23}}^{\tilde{C}}$, $\Delta_{P_{33}}^{\tilde{C}}$, $\Delta_{S_{\omega,1}}^{\tilde{C}}$, $\Delta_{S_{\omega,2}}^{\tilde{C}}$, $\Delta_{S_{\omega,r}}^{\tilde{C}}$, $\Delta_{S_{\omega,i}}^{\tilde{C}}$, and $\Delta_{J}^{\tilde{C}}$, respectively. Further, let us denote the differential of $\tilde{C}$ as $\Delta_{\tilde{C}}$. By taking differentiation of $J$ with respect to $\tilde{C}$, we get
          \begin{align}
      \Delta_{J}^{\tilde{C}}&=trace\Big(-2D_iCP_{12}(\Delta_{\tilde{C}})^TD_i^T+2D_iCP_{13}(\Delta_{\tilde{C}})^TD_i^T+2D_i\tilde{C}P_{22}(\Delta_{\tilde{C}})^TD_i^T\nonumber\\
      &\hspace*{2cm}-2D_i\tilde{C}P_{23}(\Delta_{\tilde{C}})^TD_i^T-2C_iP_{23}^T(\Delta_{\tilde{C}})^TD_i^T+2C_iP_{33}(\Delta_{\tilde{C}})^TD_i^T\nonumber\\
      &\hspace*{2cm}+2D_iC\Delta_{P_{13}}^{\tilde{C}}C_i^T-2D_i\tilde{C}\Delta_{P_{23}}^{\tilde{C}}C_i^T+C_i\Delta_{P_{33}}^{\tilde{C}}C_i^T\Big)\nonumber\\
      &=trace\Big(-2D_i^TD_iCP_{12}(\Delta_{\tilde{C}})^T+2D_i^TD_iCP_{13}(\Delta_{\tilde{C}})^T+2D_i^TD_i\tilde{C}P_{22}(\Delta_{\tilde{C}})^T\nonumber\\
      &\hspace*{2cm}-2D_i^TD_i\tilde{C}P_{23}(\Delta_{\tilde{C}})^T-2D_i^TC_iP_{23}^T(\Delta_{\tilde{C}})^T+2D_i^TC_iP_{33}(\Delta_{\tilde{C}})^T\nonumber\\
      &\hspace*{2cm}+2C^TD_i^TC_i(\Delta_{P_{13}}^{\tilde{C}})^T-2\tilde{C}^TD_i^TC_i(\Delta_{P_{23}}^{\tilde{C}})^T+C_i^TC_i\Delta_{P_{33}}^{\tilde{C}}\Big).\nonumber
      \end{align}
By taking differentiation of equations (\ref{e11}), (\ref{e13}), (\ref{e14}), and (\ref{e8}) with respect to $\tilde{C}$, we get
\begin{align}
A\Delta_{P_{13}}^{\tilde{C}}+\Delta_{P_{13}}^{\tilde{C}}A_i^T+T_{13}&=0,\label{e54}\\
\tilde{A}\Delta_{P_{23}}^{\tilde{C}}+\Delta_{P_{23}}^{\tilde{C}}A_i^T+T_{23}&=0,\label{e55}\\
A_i\Delta_{P_{33}}^{\tilde{C}}+\Delta_{P_{33}}^{\tilde{C}}A_i^T+T_{33}&=0,\label{e56}\\
A_i\Delta_{S_{\omega,1}}^{\tilde{C}}-\Delta_{S_{\omega,1}}^{\tilde{C}}A+S_3&=0,\label{e57}
\end{align}wherein
\begin{align}
T_{13}&=-P_{13}(\Delta_{\tilde{C}})^TD^{-T}\tilde{B}^T-P_{12}(\Delta_{\tilde{C}})^TB_i^T+BB^T(\Delta_{S_{\omega,1}}^{\tilde{C}})^T+B\tilde{B}^T(\Delta_{S_{\omega,2}}^{\tilde{C}})^T,\nonumber\\
T_{23}&=-P_{23}(\Delta_{\tilde{C}})^TD^{-T}\tilde{B}^T-P_{22}(\Delta_{\tilde{C}})^TB_i^T+\tilde{B}B^T(\Delta_{S_{\omega,1}}^{\tilde{C}})^T+\tilde{B}\tilde{B}^T(\Delta_{S_{\omega,2}}^{\tilde{C}})^T,\nonumber\\
T_{33}&=-\tilde{B}D^{-1}\Delta_{\tilde{C}}P_{33}-P_{33}(\Delta_{\tilde{C}})^TD^{-T}\tilde{B}^T-B_i\Delta_{\tilde{C}}P_{23}-P_{23}^T(\Delta_{\tilde{C}})^TB_i^T\nonumber\\
&\hspace*{0.5cm}+B_iC\Delta_{P_{13}}^{\tilde{C}}+(\Delta_{P_{13}}^{\tilde{C}})^TC^TB_i^T-B_i\tilde{C}\Delta_{P_{23}}^{\tilde{C}}-(\Delta_{P_{23}}^{\tilde{C}})^T\tilde{C}^TB_i^T,\nonumber\\
S_3&=-\tilde{B}D^{-1}\Delta_{\tilde{C}}S_{\omega,1}-\Delta_{S_{\omega,i}}^{\tilde{C}}B_iC.\nonumber
\end{align}
By applying Lemma \ref{lemma} on equations (\ref{e54}) and (\ref{e30}), (\ref{e55}) and (\ref{e31}), and (\ref{e56}) and (\ref{e32}), we get
\begin{align}
trace(T_{13}^T\bar{Q}_{13})&=trace(C^TB_i^T\bar{Q}_{33}(\Delta_{P_{13}}^{\tilde{C}})^T+C^TD_i^TC_i(\Delta_{P_{13}}^{\tilde{C}})^T),\nonumber\\
trace(T_{23}^T\bar{Q}_{23})&=trace(-\tilde{C}^TB_i^T\bar{Q}_{33}(\Delta_{P_{23}}^{\tilde{C}})^T-\tilde{C}^TD_i^TC_i(\Delta_{P_{23}}^{\tilde{C}})^T),\nonumber\\
trace(T_{33}\bar{Q}_{33})&=trace(C_i^TC_i\Delta_{P_{33}}^{\tilde{C}}).\nonumber
\end{align}
Thus
\begin{align}
\Delta_{J}^{\tilde{C}}&=trace\Big(-2D_i^TD_iCP_{12}(\Delta_{\tilde{C}})^T+2D_i^TD_iCP_{13}(\Delta_{\tilde{C}})^T+2D_i^TD_i\tilde{C}P_{22}(\Delta_{\tilde{C}})^T\nonumber\\
&\hspace*{0.5cm}-2D_i^TD_i\tilde{C}P_{23}(\Delta_{\tilde{C}})^T-2D_i^TC_iP_{23}^T(\Delta_{\tilde{C}})^T+2D_i^TC_iP_{33}(\Delta_{\tilde{C}})^T\nonumber\\
&\hspace*{0.5cm}-2D^{-T}\tilde{B}^T\bar{Q}_{13}^TP_{13}(\Delta_{\tilde{C}})^T-2B_i^T\bar{Q}_{13}^TP_{12}(\Delta_{\tilde{C}})^T-2D^{-T}\tilde{B}^T\bar{Q}_{23}P_{23}(\Delta_{\tilde{C}})^T\nonumber\\
&\hspace*{0.5cm}-2B_i^T\bar{Q}_{23}P_{22}(\Delta_{\tilde{C}})^T-2\tilde{D}^{-T}\tilde{B}^T\bar{Q}_{33}P_{33}(\Delta_{\tilde{C}})^T-2B_i^T\bar{Q}_{33}P_{23}^T(\Delta_{\tilde{C}})^T\nonumber\\
&\hspace*{0.5cm}+2\tilde{B}B^T\bar{Q}_{13}\Delta_{S_{\omega,2}}^{\tilde{C}}+2\tilde{B}\tilde{B}^T\Delta_{S_{\omega,2}}^{\tilde{C}}+2BB^T\bar{Q}_{13}\Delta_{S_{\omega,1}}^{\tilde{C}}+2B\tilde{B}^T\Delta_{S_{\omega,1}}^{\tilde{C}}\Big).\nonumber
      \end{align}
 Since $S_{\omega,2}=S_\omega[\tilde{A}]-S_\omega[A_i]$, $\Delta_{S_{\omega,2}}^{\tilde{C}}=-\Delta_{S_{\omega,i}}^{\tilde{C}}$. Note that
      \begin{align}
      \Delta_{S_{\omega,i}}^{\tilde{C}}&=Re\Big[\frac{j}{2\pi}L\big(-A_i-j\omega I,\tilde{B}D^{-1}\Delta_{\tilde{C}}\big)\Big],\nonumber
      \end{align}cf. \cite{petersson2013nonlinear}. Owing to the interchangeability of trace and integral,
      \begin{align}
      trace\big((\tilde{B}B^T\bar{Q}_{13}+\tilde{B}\tilde{B}^T)\Delta_{S_{\omega,i}}^{\tilde{C}}\big)&=trace(V_3\tilde{B}D^{-1}\Delta_{\tilde{C}}).\nonumber
      \end{align}
Thus
\begin{align}
\Delta_{J}^{\tilde{C}}&=trace\Big(-2D_i^TD_iCP_{12}(\Delta_{\tilde{C}})^T+2D_i^TD_iCP_{13}(\Delta_{\tilde{C}})^T+2D_i^TD_i\tilde{C}P_{22}(\Delta_{\tilde{C}})^T\nonumber\\
&\hspace*{1cm}-2D_i^TD_i\tilde{C}P_{23}(\Delta_{\tilde{C}})^T-2D_i^TC_iP_{23}^T(\Delta_{\tilde{C}})^T+2D_i^TC_iP_{33}(\Delta_{\tilde{C}})^T\nonumber\\
&\hspace*{1cm}-2D^{-T}\tilde{B}^T\bar{Q}_{13}^TP_{13}(\Delta_{\tilde{C}})^T-2B_i^T\bar{Q}_{13}^TP_{12}(\Delta_{\tilde{C}})^T\nonumber\\
&\hspace*{1cm}-2D^{-T}\tilde{B}^T\bar{Q}_{23}P_{23}(\Delta_{\tilde{C}})^T-2B_i^T\bar{Q}_{23}P_{22}(\Delta_{\tilde{C}})^T\nonumber\\
&\hspace*{1cm}-2\tilde{D}^{-T}\tilde{B}^T\bar{Q}_{33}P_{33}(\Delta_{\tilde{C}})^T-2B_i^T\bar{Q}_{33}P_{23}^T(\Delta_{\tilde{C}})^T\nonumber\\
&\hspace*{1cm}-2D^{-T}\tilde{B}^TV_3^T(\Delta_{\tilde{C}})^T+2\bar{Q}_{13}^TBB^T(\Delta_{S_{\omega,1}}^{\tilde{C}})^T+2\tilde{B}B^T(\Delta_{S_{\omega,1}}^{\tilde{C}})^T\Big).\nonumber
      \end{align}
By applying Lemma \ref{lemma} on equations (\ref{e57}) and (\ref{e33}), we get
\begin{align}
trace(S_3^TE_2)&=trace\big((\bar{Q}_{13}^TBB^T+\tilde{B}B^T)(\Delta_{S_{\omega,1}}^{\tilde{C}})^T\big).\nonumber
\end{align}
Thus
\begin{align}
\Delta_{J}^{\tilde{C}}&=trace\Big(-2D_i^TD_iCP_{12}(\Delta_{\tilde{C}})^T+2D_i^TD_iCP_{13}(\Delta_{\tilde{C}})^T+2D_i^TD_i\tilde{C}P_{22}(\Delta_{\tilde{C}})^T\nonumber\\
&\hspace*{1cm}-2D_i^TD_i\tilde{C}P_{23}(\Delta_{\tilde{C}})^T-2D_i^TC_iP_{23}^T(\Delta_{\tilde{C}})^T+2D_i^TC_iP_{33}(\Delta_{\tilde{C}})^T\nonumber\\
&\hspace*{1cm}-2D^{-T}\tilde{B}^T\bar{Q}_{13}^TP_{13}(\Delta_{\tilde{C}})^T-2B_i^T\bar{Q}_{13}^TP_{12}(\Delta_{\tilde{C}})^T\nonumber\\
&\hspace*{1cm}-2D^{-T}\tilde{B}^T\bar{Q}_{23}P_{23}(\Delta_{\tilde{C}})^T-2B_i^T\bar{Q}_{23}P_{22}(\Delta_{\tilde{C}})^T\nonumber\\
&\hspace*{1cm}-2\tilde{D}^{-T}\tilde{B}^T\bar{Q}_{33}P_{33}(\Delta_{\tilde{C}})^T-2B_i^T\bar{Q}_{33}P_{23}^T(\Delta_{\tilde{C}})^T\nonumber\\
&\hspace*{1cm}-2D^{-T}\tilde{B}^TV_3^T(\Delta_{\tilde{C}})^T-2D^{-T}\tilde{B}^TE_2S_{\omega,1}^T(\Delta_{\tilde{C}})^T\nonumber\\
&\hspace*{1cm}-2B_iCE_2^T\Delta_{S_{\omega,i}}^{\tilde{C}}\Big).\nonumber
      \end{align}
Note that $trace(V_4\tilde{B}D^{-1}\Delta_{\tilde{C}})=trace(B_iCE_2^T\Delta_{S_{\omega,i}}^{\tilde{C}})$ due to the interchangeability of trace and integral. Thus
\begin{align}
\Delta_{J}^{\tilde{C}}&=trace\Big(-2D_i^TD_iCP_{12}(\Delta_{\tilde{C}})^T+2D_i^TD_i\tilde{C}P_{22}(\Delta_{\tilde{C}})^T+2D_i^TD_iCP_{13}(\Delta_{\tilde{C}})^T\nonumber\\
&\hspace*{1cm}-2D_i^TD_i\tilde{C}P_{23}(\Delta_{\tilde{C}})^T-2D_i^TC_iP_{23}^T(\Delta_{\tilde{C}})^T+2D_i^TC_iP_{33}(\Delta_{\tilde{C}})^T\nonumber\\
&\hspace*{1cm}+2B_i^T\bar{Q}_{13}^TP_{13}(\Delta_{\tilde{C}})^T-2B_i^T\bar{Q}_{13}^TP_{12}(\Delta_{\tilde{C}})^T+2B_i^T\bar{Q}_{23}P_{23}(\Delta_{\tilde{C}})^T\nonumber\\
&\hspace*{1cm}-2B_i^T\bar{Q}_{23}P_{22}(\Delta_{\tilde{C}})^T+2B_i^T\bar{Q}_{33}P_{33}(\Delta_{\tilde{C}})^T-2B_i^T\bar{Q}_{33}P_{23}^T(\Delta_{\tilde{C}})^T\nonumber\\
&\hspace*{1cm}+2B_i^TV_3^T(\Delta_{\tilde{C}})^T+2B_i^TE_2S_{\omega,1}^T(\Delta_{\tilde{C}})^T+2B_i^TV_4^T(\Delta_{\tilde{C}})^T\Big).\nonumber
      \end{align}
Hence,
      \begin{align}
      \frac{\partial}{\partial\tilde{C}}||\Delta_r(s)||_{\mathcal{H}_{2,\omega}}^2=2(-D_i^TD_iCP_{12}+D_i^TD_i\tilde{C}P_{22}+d_3),\nonumber
      \end{align} and
      \begin{align}
      -D_i^TD_iCP_{12}+D_i^TD_i\tilde{C}P_{22}+d_3=0\nonumber
      \end{align} is a necessary condition for the local optimum of $||\Delta_r(s)||_{\mathcal{H}_{2,\omega}}^2$. This completes the proof.
\end{proof}
\subsection{An Oblique Projection Algorithm}
In this subsection, a sketch of an oblique projection algorithm is presented, utilizing the necessary conditions for a local optimum. Furthermore, it addresses the computational and numerical aspects associated with the proposed algorithm.
\subsubsection{Limitation in the Oblique Projection Framework}
Assuming the invertibility of $P_{22}$ and $Q_{22}$, the necessary conditions yield the optimal values for $\tilde{B}$ and $\tilde{C}$ as follows:
\begin{align}
\tilde{B}&=-Q_{22}^{-1}Q_{12}^TB-Q_{22}^{-1}d_2,&\tilde{C}&=CP_{12}P_{22}^{-1}-(D_i^TD_i)^{-1}d_3P_{22}^{-1}.\nonumber
\end{align} Within the oblique projection framework, this implies setting the reduction matrices as $\tilde{V}=P_{12}P_{22}^{-1}$ and $\tilde{W}=-Q_{12}Q_{22}^{-1}$, resulting in:
\begin{align}
\tilde{B}&=\tilde{W}^TB=-Q_{22}^{-1}Q_{12}^TB,&\tilde{C}&=C\tilde{V}=CP_{12}P_{22}^{-1}.\nonumber
\end{align}
Consequently, it follows that:
\begin{align}
Q_{12}^TP_{12}+Q_{22}P_{22}=0\nonumber
\end{align} since $\tilde{W}^T\tilde{V}=I$ in the oblique projection. Before delving into the analysis of deviations from the optimality conditions, let us present a slightly modified expression for the optimality condition (\ref{e34}). Note that:
\begin{align}
Q_{12}^TP_{12}&=Q_{12}^T\big(S_\omega[A]\bar{P}_{12}+\bar{P}_{12}S_\omega[\tilde{A}^T]\big),\nonumber\\
Q_{22}P_{22}&=Q_{22}\big(S_\omega[\tilde{A}]\bar{P}_{22}+\bar{P}_{22}S_\omega[\tilde{A}^T]\big),\nonumber
\end{align}as described in \cite{zulfiqar2019adaptive,zulfiqar2021h2}. By post-multiplying with $S_\omega[\tilde{A}^T]$ (assuming $S_\omega[\tilde{A}^T]$ is full rank), equation (\ref{e34}) becomes:
\begin{align}
&Q_{12}^T\bar{P}_{12}S_\omega[\tilde{A}^T]+Q_{22}\bar{P}_{22}S_\omega[\tilde{A}^T]+d_1S_\omega[\tilde{A}^T]\nonumber\\
=&Q_{12}^TP_{12}+Q_{22}P_{22}-Q_{12}^TS_\omega[A]\bar{P}_{12}-Q_{22}S_\omega[\tilde{A}]\bar{P}_{22}+d_1S_\omega[\tilde{A}^T]\nonumber\\
=&Q_{12}^TP_{12}+Q_{22}P_{22}+\tilde{d}_1\nonumber\\
=&0.\nonumber
\end{align}
It is important to note that $\tilde{d}_1$, $d_2$, and $d_3$ do not reduce to zeros when $\tilde{V}$ and $\tilde{W}$ are set to $\tilde{V}=P_{12}P^{-1}_{22}$ and $\tilde{W}=-Q_{12}Q^{-1}_{22}$.  Hence, it is inherently impossible to construct a local optimum of $||\Delta_r(s)||_{\mathcal{H}_{2,\omega}}^2$ within the oblique projection framework. Nonetheless, the reduction matrices targeting the local optimum are identified. Subsequently, a sketch of an oblique projection algorithm based on these reduction matrices is presented.
\subsubsection{Algorithm}
The matrices associated with the identified reduction matrices, namely $P_{12}$, $P_{22}$, $Q_{12}$, and $Q_{22}$, depend on the unknown ROM. Therefore, an iterative solution starting with an arbitrarily chosen ROM appears to be a viable option. According to the oblique projection theory, it is known that $\tilde{V}=P_{12}$ and $\tilde{W}=Q_{12}$ span the same subspace as $\tilde{V}=P_{12}P^{-1}_{22}$ and $\tilde{W}=-Q_{12}Q^{-1}_{22}$ since the matrices $P^{-1}_{22}$ and $-Q^{-1}_{22}$ only modify the subspace basis \cite{gallivan2004sylvester}. The invertibility of $P_{22}$ and $Q_{22}$ in each iteration can pose a restrictive condition leading to numerical ill-conditioning or even failure of the iterative algorithm. Hence, we proceed with the choice $\tilde{V}=P_{12}$ and $\tilde{W}=Q_{12}$ for the remainder of our discussion. The ROM $(\tilde{A},\tilde{B},\tilde{C})$ and the reduction matrices $(\tilde{V},\tilde{W})$ within the oblique projection framework can be viewed as two interconnected systems given by:
\begin{align}
f(\tilde{V},\tilde{W})&=(\tilde{A},\tilde{B},\tilde{C})&\textnormal{and}&&g(\tilde{A},\tilde{B},\tilde{C})&=(\tilde{V},\tilde{W}).\nonumber
\end{align}Thus, the problem can be considered as a stationary point problem:
\begin{align}
(\tilde{A},\tilde{B},\tilde{C})=f\big(g(\tilde{A},\tilde{B},\tilde{C})\big)\nonumber
\end{align} subject to the constraint $\tilde{W}^T\tilde{V}=I$. Prior to outlining a stationary point iteration algorithm, two issues need to be addressed when $D$ and $\tilde{H}(s)$ are non-invertible. It is worth noting that these issues have been successfully handled in BST, and we can employ the same conventional wisdom to overcome them in our algorithm. In the case of a rank-deficient matrix $D$, it can be replaced with $D=\epsilon I$, where $\epsilon$ is a small positive value \cite{benner2001efficient}. As long as $\epsilon$ is sufficiently small, it does not adversely affect the accuracy when the original $D$ matrix is reintroduced into the final ROM. This strategy is utilized in MATLAB's built-in function ``bst()" for BST to handle rank-deficient $D$ matrices. The stationary point iteration algorithm can become significantly restrictive if it needs to ensure the stability and minimum phase property of $\tilde{H}(s)$ in each iteration. In BST, it has been established that to reduce $||H^{-1}(s)\Delta_a(s)||_{\mathcal{H}_\infty}$, $H(s)$ can be replaced with a minimum phase spectral factor of $H(s)H^*(s)$. We will now demonstrate that the expression of $||\Delta_r(s)||_{\mathcal{H}_{2,\omega}}$ allows us to replace $\tilde{H}(s)$ with a minimum phase spectral factor of $\tilde{H}(s)\tilde{H}^*(s)$.

Consider $\tilde{G}(s)$ as a minimum phase right spectral factor of $\tilde{H}(s)\tilde{H}^*(s)$, guaranteeing $\tilde{G}^*(s)\tilde{G}(s)=\tilde{H}(s)\tilde{H}^*(s)$. By utilizing the definition of the $\mathcal{H}_{2,\omega}$ norm, we can express $||\Delta_r(s)||{\mathcal{H}_{2,\omega}}$ as follows:
\begin{align}
||\Delta_r(s)||_{\mathcal{H}_{2,\omega}}&=\sqrt{\frac{1}{2\pi}\int_{-\Omega}^{\Omega}trace\Big(\Delta_r^*(j\nu)\Delta_r(j\nu)\Big)d\nu}\nonumber\\
  &=\sqrt{\frac{1}{2\pi}\int_{-\Omega}^{\Omega}trace\Big(\Delta_a^*(j\nu)[\tilde{H}(j\nu)\tilde{H}^{-*}(j\nu)]^{-1}\Delta_a(j\nu)\Big)d\nu}\nonumber\\
   &=\sqrt{\frac{1}{2\pi}\int_{-\Omega}^{\Omega}trace\Big(\Delta_a^*(j\nu)\tilde{G}^{-1}(j\nu)\tilde{G}^{-*}(j\nu)\Delta_a(j\nu)\Big)d\nu}\nonumber.
\end{align}
It is worth noting that replacing $\tilde{H}^{-1}(s)$ with $\tilde{G}^{-*}(s)$ has no effect on $||\Delta_r(s)||_{\mathcal{H}_{2,\omega}}$. This substitution offers the advantage of constructing a stable minimum phase $\tilde{G}^*(s)$, even if $\tilde{H}(s)$ is not a minimum phase system. We will now proceed to construct a state-space realization of $\tilde{G}^{-*}(s)$ using a methodology similar to BST.

We can express the controllability gramian $\tilde{P}$ associated with the pair $(\tilde{A},\tilde{B})$ and compute it by solving the Lyapunov equation
   \begin{align}
   \tilde{A}\tilde{P}+\tilde{P}\tilde{A}^T+\tilde{B}\tilde{B}^T=0.\nonumber
   \end{align}
Now, let's introduce the definitions of $\tilde{B}_s$ and $\tilde{A}_s$ as follows:
  \begin{align}
  \tilde{B}_s&=\tilde{P}\tilde{C}^T-\tilde{B}D^T,&\tilde{A}_s&=\tilde{A}-\tilde{B}_s(DD^T)^{-1}\tilde{C}.\nonumber
  \end{align}
Additionally, we need to solve the Riccati equation with $\tilde{X}_s$ given by:
  \begin{align}
  \tilde{A}_s^T\tilde{X}_s+\tilde{X}_s\tilde{A}_s+\tilde{X}_s\tilde{B}_s(DD^T)^{-1}\tilde{B}_s^T\tilde{X}_s+\tilde{C}^T(DD^T)^{-1}\tilde{C}=0.\label{e37}
  \end{align}
 Using these definitions, we can derive a minimum phase realization of $\tilde{G}(s)$ as follows:
  \begin{align}
  \tilde{A}_x&=\tilde{A},&\tilde{B}_x&=\tilde{B}_s,&\tilde{C}_x&=D^{-1}(\tilde{C}-\tilde{B}_s^T\tilde{X}_s),&\tilde{D}_x=D^T\label{e38}
  \end{align}(refer to \cite{zhou1996robust}).
 Moreover, we can obtain a possible realization of $\tilde{G}^*(s)$ as follows:
  \begin{align}
  \tilde{A}_g&=-\tilde{A}_x^T,&\tilde{B}_g&=-\tilde{C}_x^T,&\tilde{C}_g&=\tilde{B}_x^T,&\tilde{D}_g&=\tilde{D}_x^T.
  \end{align}
  Lastly, a stable realization of $\tilde{G}^{-*}(s)$ can be obtained through the expressions:
  \begin{align}
\tilde{A}_{\xi}&=\tilde{A}_g-\tilde{B}_g\tilde{D}_g^{-1}\tilde{C}_g,& \tilde{B}_{\xi}&=-\tilde{B}_g\tilde{D}_g^{-1}, & \tilde{C}_{\xi}&=\tilde{D}_g^{-1}\tilde{C}_g,& \tilde{D}_{\xi}&=\tilde{D}_g^{-1}\label{e39}
\end{align}(refer to \cite{zhou1995frequency,zhou1996robust}).

We present our proposed algorithm, FLRHMORA (Frequency-limited Relative-error $\mathcal{H}_{2}$ MOR Algorithm). The pseudo-code for FLRHMORA is provided in Algorithm \ref{Alg1}. To begin, Step \ref{stp1} replaces a rank-deficient $D$ with a full-rank $D$. Following that, Steps \ref{stp3}-\ref{stp4} involve the computation of a stable state-space realization of $\tilde{G}^{-*}(s)$. Steps \ref{stp5}-\ref{stp7} focus on the calculation of the reduction matrices $\tilde{V}$ and $\tilde{W}$. The bi-orthogonal Gram-Schmidt method, described in \cite{benner2011sparse}, is employed in Steps \ref{stp8}-\ref{stp12} to ensure $\tilde{W}^T\tilde{V}=I$. Finally, Step \ref{stp13} executes an update to the ROM.
\begin{algorithm}[!h]
\textbf{Input:} Original system: $(A,B,C,D)$; Desired frequency interval: $[0,\Omega]$ rad/sec; Initial guess: $(\tilde{A},\tilde{B},\tilde{C})$; Allowable iteration: $i_{max}$.\\ \textbf{Output:} ROM $(\tilde{A},\tilde{B},\tilde{C})$.
  \begin{algorithmic}[1]
      \STATE \textbf{if} ($rank[D]<m$) $D=\epsilon I$ \textbf{end if}\label{stp1}
      \STATE $k=0$, \textbf{while} (not converged) \textbf{do} $k=k+1$.
      \STATE Solve equation (\ref{e37}) to compute $\tilde{X}_s$.\label{stp3}
      \STATE Compute $(\tilde{A}_{\xi},\tilde{B}_{\xi},\tilde{C}_{\xi},\tilde{D}_{\xi})$ using equations (\ref{e38})-(\ref{e39}).\label{stp4}
       \STATE Solve equation (\ref{e6}) to compute $P_{12}$.\label{stp5}
       \STATE Solve the following equations to compute $S_{\omega,1}$ and $S_{\omega,2}$:\\
$\tilde{A}_{\xi}S_{\omega,1}-S_{\omega,1}A+\tilde{B}_{\xi}CS_\omega[A]-S_\omega[\tilde{A}_{\xi}]\tilde{B}_{\xi}C=0$,\\
$\tilde{A}_{\xi}S_{\omega,2}-S_{\omega,2}\tilde{A}+\tilde{B}_{\xi}\tilde{C}S_\omega[\tilde{A}]-S_\omega[\tilde{A}_{\xi}]\tilde{B}_{\xi}\tilde{C}=0$.\\
      \STATE Solve the following equations to compute $Q_{33}$, $Q_{13}$, $Q_{23}$, and $Q_{12}$:\label{stp7}\\
      $\tilde{A}_{\xi}^TQ_{33}+Q_{33}\tilde{A}_{\xi}+S_\omega[\tilde{A}_{\xi}^T]\tilde{C}_{\xi}^T\tilde{C}_{\xi}+\tilde{C}_{\xi}^T\tilde{C}_{\xi}S_\omega[\tilde{A}_{\xi}]=0$,\\
      $A^TQ_{13}+Q_{13}\tilde{A}_{\xi}+C^T\tilde{B}_{\xi}^TQ_{33}+S_\omega[A^T]C^T\tilde{D}_{\xi}^T\tilde{C}_{\xi}+C^T\tilde{D}_{\xi}^T\tilde{C}_{\xi}S_\omega[\tilde{A}_{\xi}]+S_{\omega,1}^T\tilde{C}_{\xi}^T\tilde{C}_{\xi}=0$,\\
      $\tilde{A}^TQ_{23}+Q_{23}\tilde{A}_{\xi}-\tilde{C}^T\tilde{B}_{\xi}^TQ_{33}-S_\omega[\tilde{A}^T]\tilde{C}^T\tilde{D}_{\xi}^T\tilde{C}_{\xi}-\tilde{C}^T\tilde{D}_{\xi}^T\tilde{C}_{\xi}S_\omega[\tilde{A}_{\xi}]+S_{\omega,2}^T\tilde{C}_{\xi}^T\tilde{C}_{\xi}=0$,\\
      $A^TQ_{12}+Q_{12}\tilde{A}+C^T\tilde{B}_{\xi}^TQ_{23}^T-Q_{13}\tilde{B}_{\xi}\tilde{C}-S_\omega[A^T]C^T\tilde{D}_{\xi}^T\tilde{D}_{\xi}\tilde{C}-C^T\tilde{D}_{\xi}^T\tilde{D}_{\xi}\tilde{C}S_\omega[\tilde{A}]-S_{\omega,1}^TC_{\xi}^TD_{\xi}\tilde{C}+C^TD_{\xi}^TC_{\xi}S_{\omega,2}=0$.
      \STATE \textbf{for} $i=1,\ldots,r$ \textbf{do}\label{stp8}
      \STATE $\tilde{v}=P_{12}(:,i)$, $\tilde{v}=\prod_{l=1}^{i}\big(I-P_{12}(:,l)Q_{12}(:,l)^T\big)\tilde{v}$.
      \STATE $\tilde{w}=Q_{12}(:,i)$, $\tilde{w}=\prod_{l=1}^{i}\big(I-Q_{12}(:,l)P_{12}(:,l)^T\big)\tilde{w}$.
      \STATE $\tilde{v}=\frac{\tilde{v}}{||\tilde{v}||_2}$, $\tilde{w}=\frac{\tilde{w}}{||\tilde{w}||_2}$, $\tilde{v}=\frac{\tilde{v}}{\tilde{w}^T\tilde{v}}$, $\tilde{V}(:,i)=\tilde{v}$, $\tilde{W}(:,i)=\tilde{w}$.
      \STATE \textbf{end for}\label{stp12}
      \STATE $\tilde{A}=\tilde{W}^TA\tilde{V}$, $\tilde{B}=\tilde{W}^TB$, $\tilde{C}=C\tilde{V}$.\label{stp13}
      \STATE \textbf{if} ($k=i_{max}$) Break loop \textbf{end if}
      \STATE \textbf{end while}
  \end{algorithmic}
  \caption{FLRHMORA}\label{Alg1}
\end{algorithm}
\begin{remark}
By referring to \cite{petersson2013nonlinear}, it can be noted that FLRHMORA allows the inclusion of a specific frequency range $[\omega_1,\omega_2]$ rad/sec through the adjustment of $S_\omega[A]$, $S_\omega[\tilde{A}]$, and $S_\omega[\tilde{A}_{\xi}]$, given by:
\begin{align}
S_\omega[A]&=Real\Big(\frac{j}{2\pi}ln\big((j\omega_1 I+A)^{-1}(j\omega_2 I+A)\big)\Big),\nonumber\\
S_\omega[\tilde{A}]&=Real\Big(\frac{j}{2\pi}ln\big((j\omega_1 I+\tilde{A})^{-1}(j\omega_2 I+\tilde{A})\big)\Big),\nonumber\\
S_\omega[\tilde{A}_{\xi}]&=Real\Big(\frac{j}{2\pi}ln\big((j\omega_1 I+\tilde{A}_{\xi})^{-1}(j\omega_2 I+\tilde{A}_{\xi})\big)\Big).\nonumber
\end{align}
\end{remark}
\begin{remark}
Considering that FLRHMORA typically does not yield a local optimum even at the point of convergence, it is a reasonable choice to stop the algorithm prematurely after a predefined number of iterations. By doing so, the computational cost can be controlled in cases where FLRHMORA does not converge rapidly.
\end{remark}
\subsubsection{Selection of the Initial Guess}
The initial guess plays a crucial role in the overall performance of an iterative method. As $\Delta_r$ can be interpreted as a weighted measure of additive error, it is recommended to adhere to the guidelines outlined in \cite{anic2013interpolatory,breiten2015near,zulfiqar2022frequency} for selecting the initial guess. These guidelines propose that an initial ROM with dominant poles and/or zeros from the original model leads to a smaller initial error. The computation of such an initial ROM can be efficiently performed using the eigensolvers introduced in \cite{rommes2006efficient} and \cite{martins2007computation}.
\subsubsection{Computational Aspects}
Let us briefly discuss the computational aspects of FLRHMORA.
The computation of matrices $\tilde{X}_s$ and $(\tilde{A}_{\xi},\tilde{B}_{\xi},\tilde{C}_{\xi},\tilde{D}_{\xi})$ is inexpensive due to their involvement with small matrices. Similarly, the matrix logarithms $S_\omega[\tilde{A}]$, $S_\omega[\tilde{A}_{\xi}]$, and $S_{\omega,2}$ can also be computed at a low cost for the same reason. Furthermore, the computation of $Q_{33}$ and $Q_{23}$ is also efficient. However, when dealing with a large-scale matrix $A$, the computation of the matrix logarithm $S_\omega[A]$ becomes computationally expensive. It should be noted that in order to compute $P_{12}$, $S_{\omega,1}$, $Q_{13}$, and $Q_{12}$, we require the products $S_\omega[A]B$ and $CS_\omega[A]$. In \cite{benner2016frequency}, efficient projection methods are proposed to approximate $S_\omega[A]B$ and $CS_\omega[A]$ as $\tilde{V}S_\omega[\tilde{A}]\tilde{B}$ and $\tilde{C}S_\omega[\tilde{A}]\tilde{W}^T$, respectively, specifically designed for large-scale settings. These methods enable the computational feasibility of FLRHMORA in large-scale scenarios.

The most computationally demanding step following the computation of logarithmic products $S_\omega[A]B$ and $CS_\omega[A]$ is the solution of Sylvester equations for $P_{12}$, $S_{\omega,1}$, $Q_{13}$, and $Q_{12}$. These equations have the following structure:
\begin{align}
\mathscr{K}\mathscr{J}+\mathscr{J}\mathscr{L}+\mathscr{M}\mathscr{N}=0,\nonumber
\end{align} where $\mathscr{K}\in\mathbb{R}^{n\times n}$, $\mathscr{L}\in\mathbb{R}^{r\times r}$, $\mathscr{M}\in\mathbb{R}^{n\times q}$, $\mathscr{N}\in\mathbb{R}^{q\times r}$, and $q\ll n$. The large matrices $\mathscr{K}$ and $\mathscr{M}$ are usually sparse, while the smaller matrices $\mathscr{L}$ and $\mathscr{N}$ are dense. The sparsity arises from the mathematical modeling stage of large-scale systems. This specific type of Sylvester equation is known as the ``sparse-dense" Sylvester equation in the literature, and it commonly arises in $\mathcal{H}_2$ MOR algorithms. In \cite{benner2011sparse}, an efficient and robust method is proposed to solve the sparse-dense Sylvester equation, which can further reduce the computational time by utilizing sparse linear matrix solvers, such as those described in \cite{davis2004algorithm} and \cite{demmel1999supernodal}. Furthermore, the computational cost of solving a sparse-dense Sylvester equation is nearly equivalent to that of constructing a rational Krylov subspace, a well-established approach known for its efficiency, as discussed in \cite{wolf2014h,panzer2014model}.

To summarize, FLRHMORA can be effectively applied in large-scale contexts by utilizing the approximation techniques presented in \cite{benner2016frequency} for the logarithmic products $S_\omega[A]B$ and $CS_\omega[A]$, along with the method proposed in \cite{benner2011sparse} for solving sparse-dense Sylvester equations.
\section{Numerical Results}
In this section, the efficacy of FLRHMORA is tested by designing reduced-order controllers for three high-order plants. The first two plants are SISO models while the third plant is a MIMO model. A loop shaping-based $\mathcal{H}_\infty$ controller $K(s)$ is designed for each plant using the controller design procedure proposed in \cite{mcfarlane1992loop}. For demonstration purposes, the operating bandwidth $[0,\omega_0]$ rad/sec is selected arbitrarily in each example. In loop shaping, the controller $K(s)$ is designed to obtain a high loop gain of $K(s)\tilde{H}(s)$ in the operating bandwidth for good noise attenuation. Beyond the operating bandwidth, a small loop gain of $K(s)\tilde{H}(s)$ is obtained for good robust stability in the presence of uncertainties. For instance, the desired loop shape $K(s)\tilde{H}(s)\approx\frac{\omega_0}{s}$ generally provides good noise attenuation and robust stability. This loop shape ensures that $[I+K(s)\tilde{H}(s)]^{-1}K(s)\tilde{H}(s)\approx I$ within the operating bandwidth and $[I+K(s)\tilde{H}(s)]^{-1}K(s)\tilde{H}(s)\approx 0$ outside that region. Recall that the following reduction criterion ensures that $K(s)$ is also a stabilizing controller for $H(s)$
\begin{align}
\underset{\substack{\tilde{H}(s)\\\textnormal{order}=r}}{\text{min}}||[I+K(s)\tilde{H}(s)]^{-1}K(s)\tilde{H}(s)\Delta_r(s)||,
\end{align} cf. \cite{ennth}. Thus plant reduction is a frequency-limited problem with the loop shaping controller design, i.e.,
$\underset{\substack{\tilde{H}(s)\\\textnormal{order}=r}}{\text{min}}||\Delta_r(s)||$ within $[0,\omega_0]$ rad/sec. In this section, it is highlighted with numerical results that FLRHMORA provides good accuracy with respect to the relative error $\Delta_r$, similar to FLBST. The closed-loop robust stability performance of FLBT, FLBST, FLIRKA, and FLRHMORA is also compared after designing $\mathcal{H}_\infty$ controllers for the reduced-order plants constructed by these algorithms. To that end, the actual loop shape achieved with the original high-order plant, i.e., $K(s)H(s)$ and the robust stability measure $[I+K(s)\tilde{H}(s)]^{-1}K(s)\tilde{H}(s)\Delta_r(s)||_{\mathcal{H}_\infty}$ are compared.
\subsection{Experimental Setup}
All the tests are performed using MATLAB R2016a on a laptop with 16GB memory and a 2GHZ Intel i7 processor. The Lyapunov/Sylvester equations, Riccati equations, and matrix logarithms are computed using MATLAB's \textit{``lyap"}, \textit{``care"}, and \textit{``logm"} commands, respectively. FLIRKA and FLRHMORA converge within 30 iterations in all the simulations. The matrix $D$ in the three examples is zero. Therefore, it is replaced with an arbitrary chosen $D=0.0001I$ in FLBST and FLRHMORA. FLIRKA and FLRHMORA are initialized with the same random initial guess for a fair comparison. The initial guess is generated using MATLAB's \textit{``rss"} command. For demonstration purposes, the desired frequency interval is arbitrarily chosen.
\subsection{Example 1}
In this example, we examine the $348^{th}$-order SISO model of a clamped beam, sourced from the benchmark collection presented in \cite{chahlaoui2005benchmark}. For demonstration purposes, we set the bandwidth of the closed-loop system to $[0,3]$ rad/sec. Using FLBT, FLBST, FLIRKA, and FLRHMORA, we construct ROMs of orders ranging from $15$ to $40$. The corresponding relative errors $||\Delta_r(s)||_{\mathcal{H}_{2,\omega}}$ are provided in Table \ref{tab1}, highlighting the notable accuracy achieved by FLRHMORA.
\begin{table}[!h]
\centering
\caption{$\mathcal{H}_2$ norm of the relative error $\Delta_{r}(s)$}\label{tab1}
\begin{tabular}{|c|c|c|c|c|}
\hline
Order & FLBT & FLBST     & FLIRKA     & FLRHMORA \\ \hline
15     &0.3031&0.1781&0.3354&0.1048\\ \hline
20     &0.2612&0.1190&0.2895&0.0980\\ \hline
25     &0.2358&0.0812&0.2596&0.0314\\ \hline
30     &0.1988&0.0591&0.1715&0.0084\\ \hline
35     &0.1595&0.0363&0.1297&0.0059\\ \hline
40     &0.0697&0.0197&0.0395&0.0017\\ \hline
\end{tabular}
\end{table}
Figure \ref{fig1} showcases the singular values of $\Delta_{a}(s)$ for $15^{th}$-order ROMs within the desired frequency interval of $[0,3]$ rad/sec.
\begin{figure}[!h]
  \centering
  \includegraphics[width=12cm]{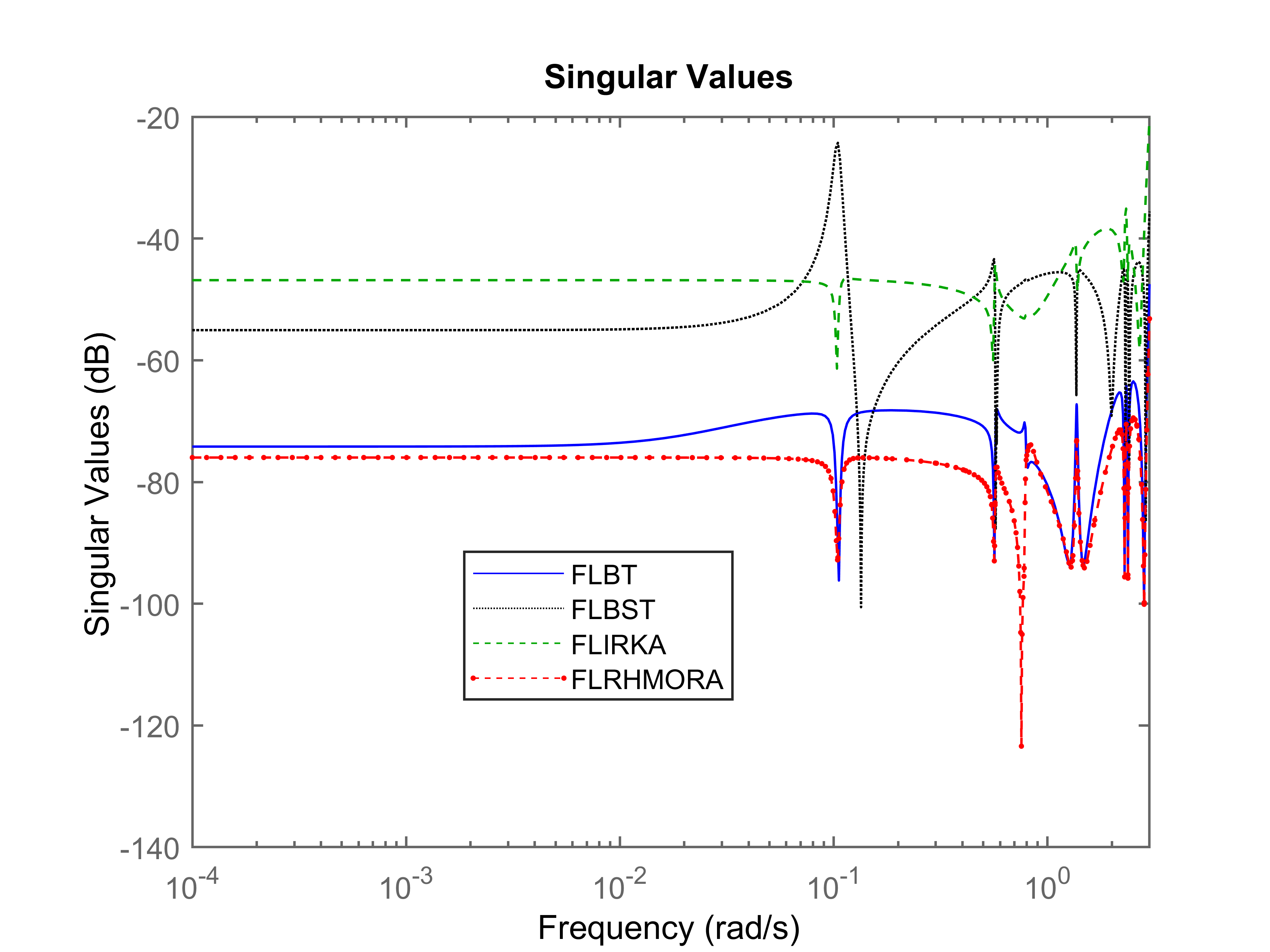}
  \caption{Singular values of $\Delta_{a}(s)$ within $[0,3]$ rad/sec}\label{fig1}
\end{figure} The accuracy achieved by FLRHMORA stands out prominently. Using the loop-shaping procedure proposed in \cite{mcfarlane1992loop}, we design an $18^{th}$-order $\mathcal{H}_\infty$ controller for each $15^{th}$-order ROM. To ensure effective noise rejection within the operating bandwidth of $[0,3]$ rad/sec, we define the desired loop shape as $3/s$. Figure \ref{fig02} visually presents the actual loop shape $K(s)H(s)$ achieved with each controller. Remarkably, the controller associated with the FLRHMORA-constructed ROM closely approximates the desired loop shape.
\begin{figure}[!h]
  \centering
  \includegraphics[width=12cm]{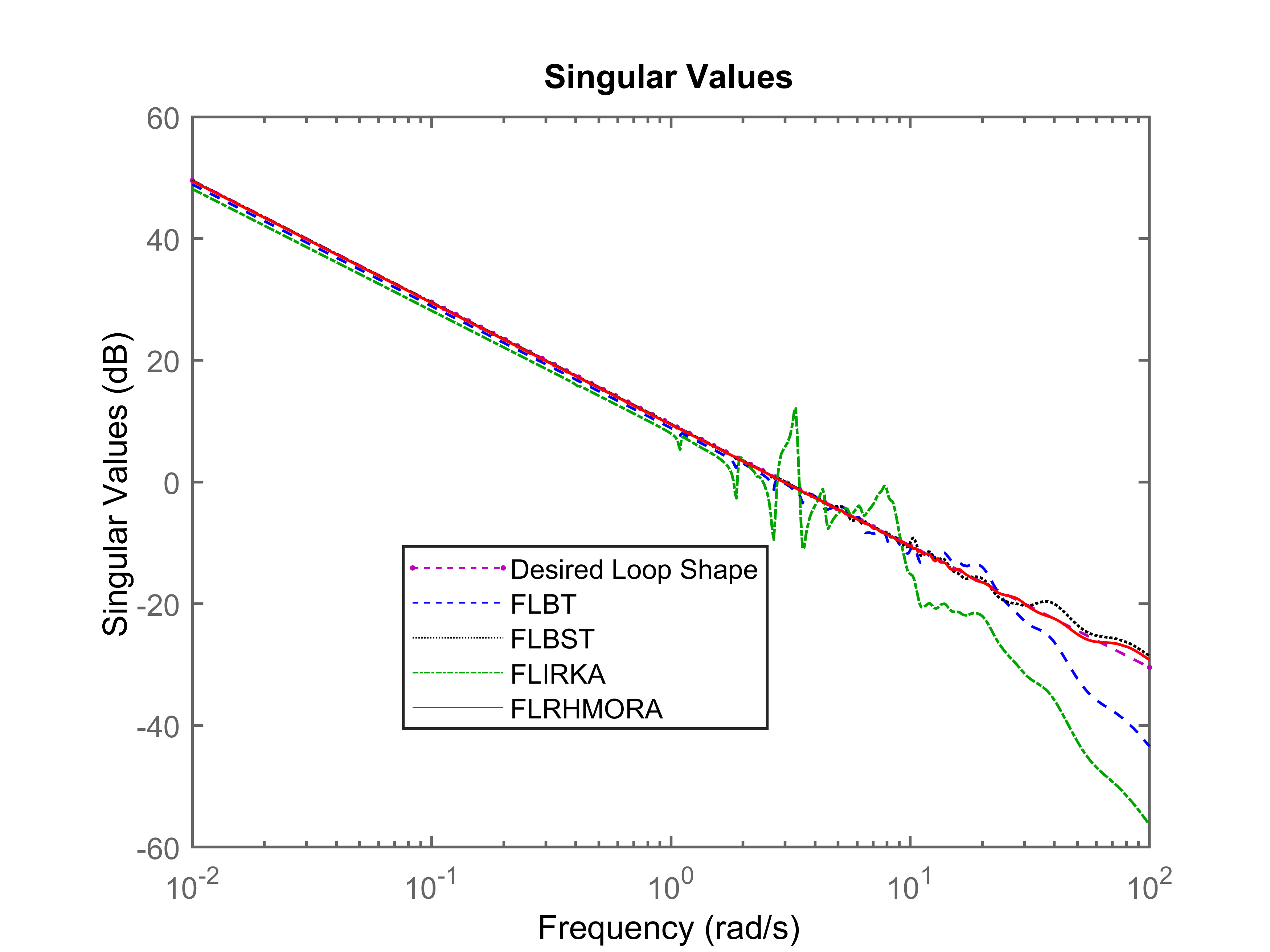}
  \caption{Actual loop shape $K(s)H(s)$}\label{fig02}
\end{figure}
Table \ref{tab02} quantifies the robustness of the controllers using the robust stability measure $||[I+K(s)\tilde{H}(s)]^{-1}K(s)\tilde{H}(s)\Delta_r(s)||_{\mathcal{H}_\infty}$, emphasizing that the FLRHMORA-constructed ROM yields the highest level of robust stability. These results affirm FLRHMORA as an effective algorithm for obtaining reduced-order controllers of superior performance for high-order plants.
\begin{table}[!h]
\centering
\caption{Robust Stability Measure}\label{tab02}
\begin{tabular}{|c|c|c|c|c|}
\hline
Method & $||[I+K(s)\tilde{H}(s)]^{-1}K(s)\tilde{H}(s)\Delta_r(s)||_{\mathcal{H}_\infty}$ \\ \hline
FLBT     & 1.8147  \\ \hline
FLBST     & 0.8127\\ \hline
FLIRKA     & 1.9058\\ \hline
FLRHMORA    & 0.3913\\ \hline
\end{tabular}
\end{table}
\subsection{Example 2}
In this example, we examine the $1006^{th}$-order SISO model of an artificial dynamical system from the benchmark collection introduced in \cite{chahlaoui2005benchmark}.  To illustrate the capabilities of the methods, we set the bandwidth of the closed-loop system as $[0,5]$ rad/sec. Utilizing FLBT, FLBST, FLIRKA, and FLRHMORA, we construct ROMs of various orders ranging from $20$ to $40$. The corresponding relative errors $||\Delta_r(s)||_{\mathcal{H}_{2,\omega}}$ are presented in Table \ref{tab2}, highlighting the commendable accuracy achieved by FLRHMORA.
\begin{table}[!h]
\centering
\caption{$\mathcal{H}_2$ norm of the relative error $\Delta_{r}(s)$}\label{tab2}
\begin{tabular}{|c|c|c|c|c|}
\hline
Order & FLBT & FLBST     & FLIRKA     & FLRHMORA \\ \hline
20     &2.5917&2.0814 &2.9097 &1.9157 \\ \hline
25     &2.3147&1.9905 &2.6278 &1.4970 \\ \hline
30     &1.9776&1.5127 &2.2546 &1.1957 \\ \hline
35     &1.2650&0.7913 &1.9957 &0.8485 \\ \hline
40     &0.9680&0.6632 &1.0964 &0.3003 \\ \hline
\end{tabular}
\end{table}
The plot in Figure \ref{fig2} showcases the singular values of $\Delta_{a}(s)$ for $20^{th}$-order ROMs within the specified frequency interval of $[0,5]$ rad/sec. The plot clearly demonstrates the accurate results achieved by FLRHMORA.
\begin{figure}[!h]
  \centering
  \includegraphics[width=12cm]{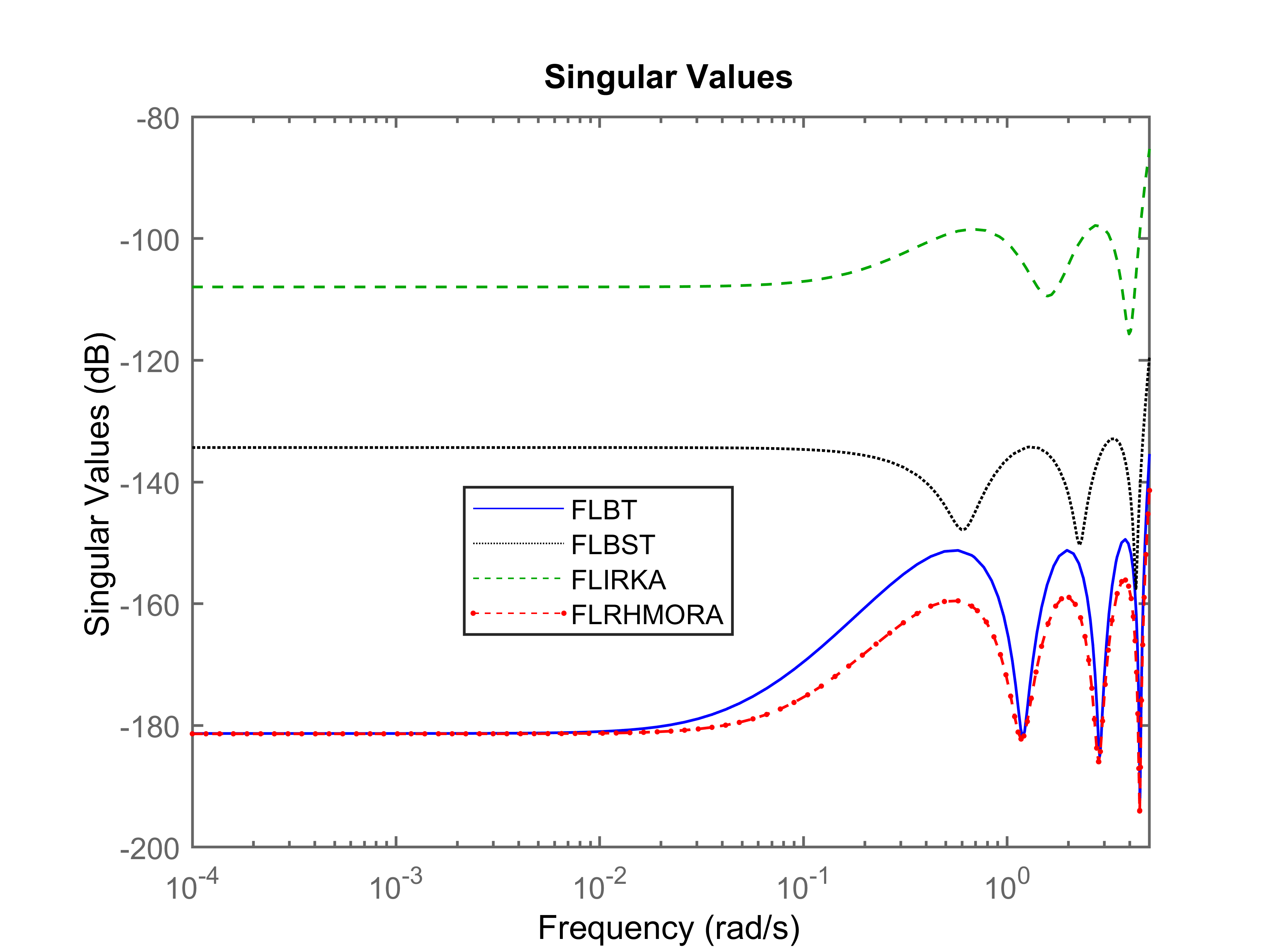}
  \caption{Singular values of $\Delta_{a}(s)$ within $[0,5]$ rad/sec}\label{fig2}
\end{figure}
For each $20^{th}$-order ROM, a $22^{th}$-order $\mathcal{H}_\infty$ controller is designed using the loop-shaping procedure outlined in \cite{mcfarlane1992loop}. In order to achieve efficient noise rejection within the frequency range of $[0,5]$ rad/sec, the desired loop shape is defined as $5/s$. Figure \ref{fig03} depicts the resulting loop shapes obtained with each controller, demonstrating that the controller derived from the ROM constructed by FLRHMORA closely matches the desired loop shape.
\begin{figure}[!h]
  \centering
  \includegraphics[width=12cm]{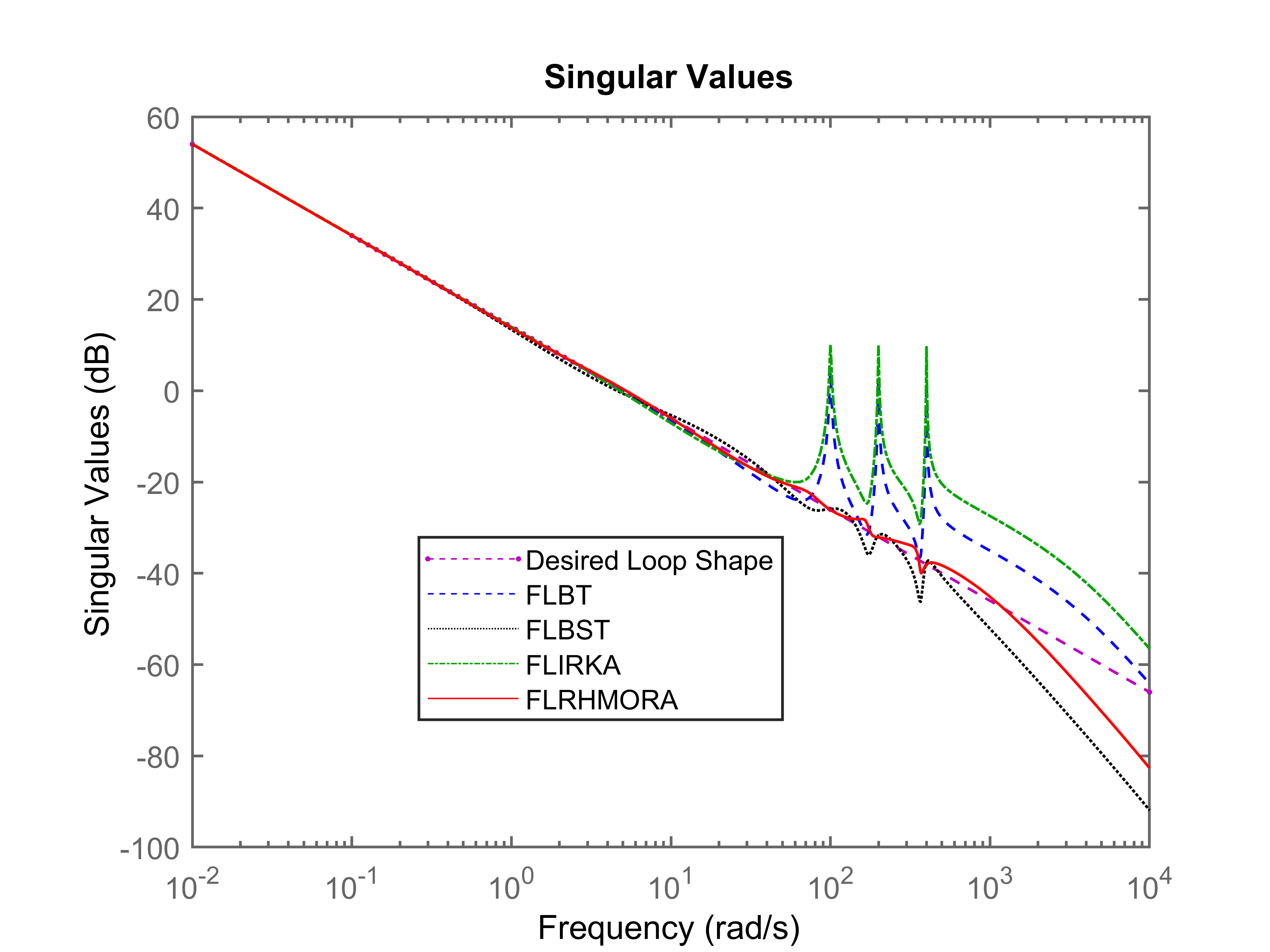}
  \caption{Actual loop shape $K(s)H(s)$}\label{fig03}
\end{figure}
To assess the robustness of the controllers, Table \ref{tab03} provides the robust stability measure $||[I+K(s)\tilde{H}(s)]^{-1}K(s)\tilde{H}(s)\Delta_r(s)||_{\mathcal{H}_\infty}$, which clearly indicates that the controller associated with FLRHMORA exhibits the highest level of robust stability.
\begin{table}[!h]
\centering
\caption{Robust Stability Measure}\label{tab03}
\begin{tabular}{|c|c|c|c|c|}
\hline
Method & $||[I+K(s)\tilde{H}(s)]^{-1}K(s)\tilde{H}(s)\Delta_r(s)||_{\mathcal{H}_\infty}$ \\ \hline
FLBT     & 2.2785  \\ \hline
FLBST     & 1.1548\\ \hline
FLIRKA     & 3.9575\\ \hline
FLRHMORA    & 0.8485\\ \hline
\end{tabular}
\end{table}
\subsection{Example 3}
For this example, we examine the Brazil interconnected power system, a MIMO model with four inputs and four outputs, which has an order of $3077$ as reported in \cite{rommes2009computing}. To demonstrate the performance of the methods, the bandwidth of the closed-loop system is set to $[0,7]$ rad/sec.  Employing FLBT, FLBST, FLIRKA, and FLRHMORA, ROMs of order $30$ are constructed. A comparison of the simulation time needed to compute the ROMs using each algorithm is presented in Table \ref{tabst}.
\begin{table}[!h]
\centering
\caption{Computational Efficiency}\label{tabst}
\begin{tabular}{|c|c|c|c|c|}
\hline
Method & Simulation Time (sec)\\ \hline
FLBT     & 216.1430  \\ \hline
FLBST     & 1152.0357\\ \hline
FLIRKA     & 31.6787\\ \hline
FLRHMORA    & 49.1712\\ \hline
\end{tabular}
\end{table}
The computational comparison reveals that FLRHMORA is slightly more computationally demanding than FLIRKA, yet significantly more efficient than FLBT and FLBST.  For each $30^{th}$-order ROM, a $33^{th}$-order $\mathcal{H}_\infty$ controller is designed using the loop-shaping procedure described in \cite{mcfarlane1992loop}. To achieve effective noise rejection within the operating bandwidth of $[0,7]$ rad/sec, the desired loop shape is specified as $7/s$. Figure \ref{fig04} showcases the achieved loop shape for each controller.
\begin{figure}[!h]
  \centering
  \includegraphics[width=12cm]{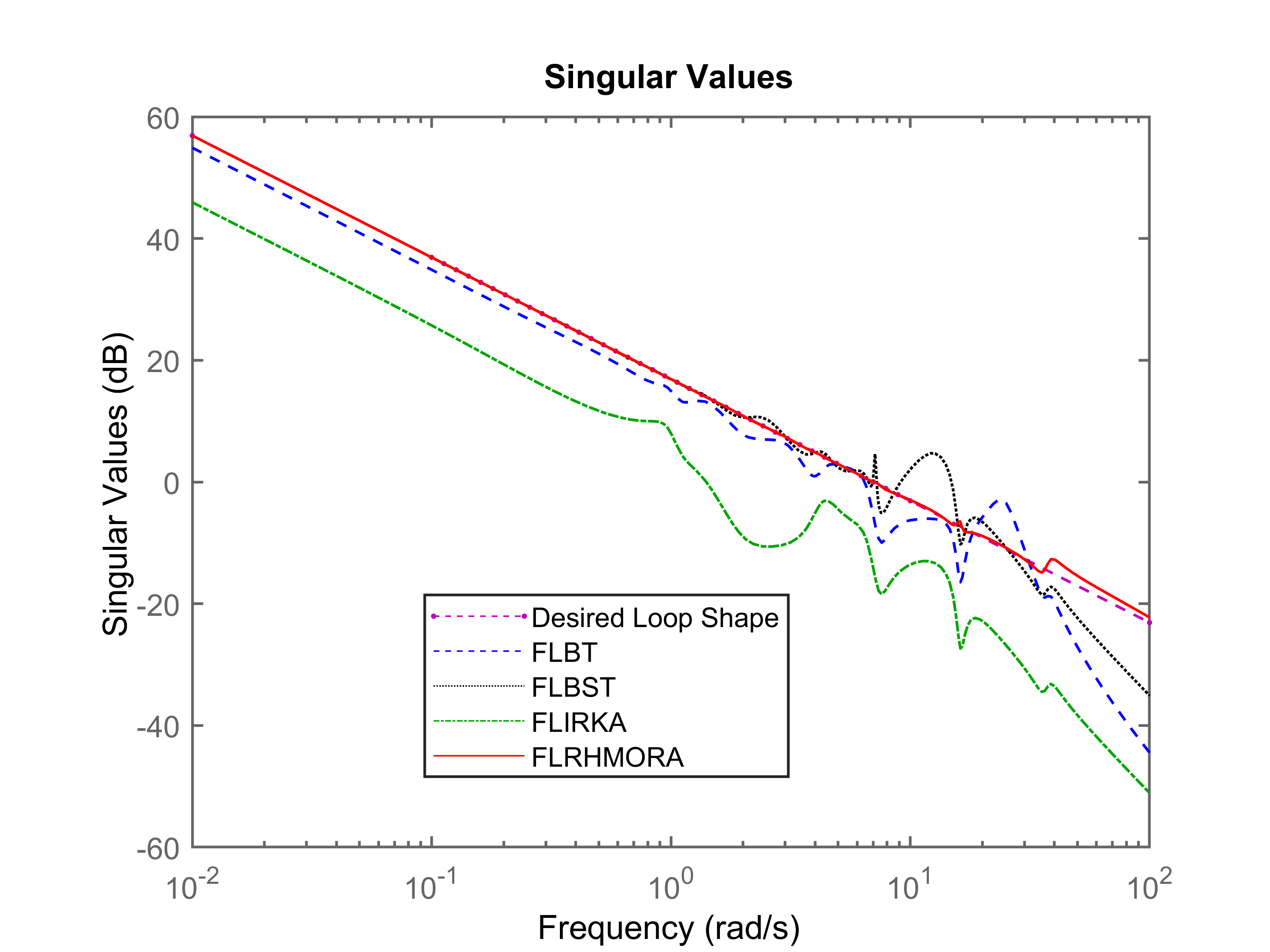}
  \caption{Actual loop shape $K(s)H(s)$}\label{fig04}
\end{figure}
The controller designed for the ROM constructed by FLRHMORA exhibits a loop shape that closely resembles the desired loop shape. In order to assess the controllers' robustness, the robust stability measure $||[I+K(s)\tilde{H}(s)]^{-1}K(s)\tilde{H}(s)\Delta_r(s)||_{\mathcal{H}_\infty}$ is presented in Table \ref{tab04}. The tabulated results highlight that the controller associated with the ROM constructed by FLRHMORA exhibits the best robust stability among the considered controllers.
\begin{table}[!h]
\centering
\caption{Robust Stability Measure}\label{tab04}
\begin{tabular}{|c|c|c|c|c|}
\hline
Method & $||[I+K(s)\tilde{H}(s)]^{-1}K(s)\tilde{H}(s)\Delta_r(s)||_{\mathcal{H}_\infty}$ \\ \hline
FLBT     & 3.2327  \\ \hline
FLBST     & 1.2569\\ \hline
FLIRKA     & 3.8184\\ \hline
FLRHMORA    & 0.9742\\ \hline
\end{tabular}
\end{table}
\section{Conclusion}
This research focuses on addressing the problem of frequency-limited relative error $\mathcal{H}_2$ MOR and presents the necessary conditions for achieving a stable minimum phase local optimum. Building upon these conditions, a generalized oblique projection algorithm is proposed, eliminating the need for the ROM to maintain stable minimum phase characteristics in every iteration. In contrast to FLBST, the proposed algorithm relies on solving small-scale Lyapunov and Riccati equations instead of computationally demanding large-scale ones, resulting in low computational costs. The oblique projection reduction matrices can be obtained by solving ``sparse-dense" Sylvester equations, for which effective numerical methods are available. The algorithm's performance is demonstrated through the design of reduced-order loop shaping controllers for benchmark high-order plants, with numerical results confirming its accuracy and robust stability when the reduced-order controllers are integrated in a closed-loop with the original high-order plant.
\section*{Acknowledgment}
This work is supported by the National Natural Science Foundation of China under Grants No. 61873336 and 62273059, the International Corporation Project of Shanghai Science and Technology Commission under Grant 21190780300, in part by The National Key Research and Development Program No. 2020YFB1708200 , and in part by the High-end foreign expert programs No. QN2022013008L, G2021013008L, and G2022013050L granted by the State Administration of Foreign Experts Affairs (SAFEA).
\section*{Competing Interest Statement}
The authors declare no conflict of interest.

\end{document}